\LetLtxMacro{\ORIGselectlanguage}{\selectlanguage}
\DeclareRobustCommand{\selectlanguage}[1]{%
  \@ifundefined{alias@\string#1}
    {\ORIGselectlanguage{#1}}
    {\begingroup\edef\x{\endgroup
       \noexpand\ORIGselectlanguage{\@nameuse{alias@#1}}}\x}%
}
\newcommand{\definelanguagealias}[2]{%
  \@namedef{alias@#1}{#2}%
}
\newcommand{\be}{\begin{equation}}
\newcommand{\ee}{\end{equation}}
\newcommand{\bea}{\begin{eqnarray}}
\newcommand{\eea}{\end{eqnarray}}
\newcommand{\tr}{\mathop{\rm tr}}
\newcommand{\im}{{\rm i}}
\newtheorem{theorem}{Theorem}
\newtheorem{corollary}{Corollary}[theorem]
\newtheorem{definition}[theorem]{Definition}
\newcommand{\printfnsymbol}[1]{%
  \textsuperscript{\@fnsymbol{#1}}%
}
\begin{document}

\title{Avoiding barren plateaus using classical shadows}

\author{Stefan H. Sack}
\thanks{Equal contribution}
\email{stefan.sack@ist.ac.at}

\author{Raimel A. Medina}
\thanks{Equal contribution}
\email{raimel.medina@ist.ac.at}
\affiliation{IST Austria, Am Campus 1, 3400 Klosterneuburg, Austria}

\author{Alexios A. Michailidis}
\affiliation{IST Austria, Am Campus 1, 3400 Klosterneuburg, Austria}
\affiliation{Department of Theoretical Physics, University of Geneva,
24 quai Ernest-Ansermet, 1211 Geneva, Switzerland}

\author{Richard Kueng}
\affiliation{Institute for Integrated Circuits, Johannes Kepler University Linz, Altenberger Straße 69, Austria}
\author{Maksym Serbyn}
\affiliation{IST Austria, Am Campus 1, 3400 Klosterneuburg, Austria}

\date{\today}
\begin{abstract}
Variational quantum algorithms are promising algorithms for achieving quantum advantage on near-term devices. The quantum hardware is used to implement a variational wave function and measure observables, whereas the classical computer is used to store and update the variational parameters. 
The optimization landscape of expressive variational ans\"atze is however dominated by large regions in parameter space, known as barren plateaus, with vanishing gradients which prevents efficient optimization. In this work we propose a general algorithm to avoid barren plateaus in the initialization and throughout the optimization. To this end we define a notion of \emph{weak barren plateaus} (WBP) based on the entropies of local reduced density matrices. The presence of WBPs can be efficiently quantified using recently introduced shadow tomography of the quantum state with a classical computer. We demonstrate that avoidance of WBPs suffices to ensure sizable gradients in the initialization. In addition, we demonstrate that decreasing the gradient step size, guided by the entropies allows to avoid WBPs during the optimization process. This paves the way for efficient barren plateau free optimization on near-term devices.
\end{abstract}

\maketitle

\section{Introduction}

In recent years the field of quantum computation has seen rapid growth fueled by the arrival of the first generation of quantum computers, dubbed noisy intermediate-scale quantum devices (NISQ)~\cite{preskill2018quantum}. The NISQ era is characterized by quantum computers with a small number of qubits and limited control. The number of coherent operations that can be performed is small and the implementation of famous algorithms with proven quantum speedups, such as Shor's algorithm~\cite{shor}, remains out of reach. To make use of the current generation of quantum computers,  the so-called variational hybrid approach~\cite{nisq} was proposed. The idea is to use the quantum computer in a feedback loop with a classical computer, where it implements a variational wave function that is measured to compute the value of the so-called cost function. This information is then fed into a classical computer where it is processed and the variational wave function is subsequently updated aiming to find a minimum of the cost function, which provides an (approximate) solution to the computationally hard problem. The variational hybrid approach has seen a wide range of proof-of-concept applications on NISQ devices ranging from quantum chemistry~\cite{kandala2017hardware, arute2020hartree} to quantum optimization~\cite{harrigan2021quantum, lacriox2020improving} and quantum machine learning~\cite{havlicek2019supervised, johri2021nearest}.

Despite the large number of suggested applications, the variational approach encountered also  a number of obstacles, that have to be overcome for the future success of the method. In particular, the infamous emergence of \textit{barren plateaus} (BPs) implies that expressive variational ans\"atze tend to be exponentially hard to optimize~\cite{mcclean2018barren}. The main obstacle on the way to optimization lies in the fact that gradients of the cost function are on average zero and deviations vanish exponentially in system size, thus precluding any potential quantum advantage. Moreover, it has been shown that the classical optimization problem is generally NP-hard and plagued with many local minima~\cite{bittel2021training}. 

The problem of BPs attracted significant attention, and numerous approaches were proposed in the literature. In particular, the early research focused on avoidance of BP at the \emph{initialization stage} of variational algorithms~\cite{grant2019initialization, skolik2020layerwise, dborin2021matrix, holmes2021connecting, larocca2021diagnosing}. In a different direction, the relation between occurrence of BPs and the structure of the cost function was studied~\cite{cerezo2020cost, uvarov2020barren}. Also notions of so-called entanglement-induced~\cite{marrero2020entanglement} and noise-induced~\cite{wang2020noise} BPs were introduced. The relation between BPs and entanglement has lead to various proposals that suggest controlling entanglement to mitigate BPs~\cite{kim2021entanglement, kim2021quantum, patti2021entanglement, wiersema2021measurement-induced}. However, measuring entanglement is hard, therefore making these approaches impractical on a real quantum device. 

In this work we introduce the notion of \textit{weak barren plateaus} (WBPs), in order to diagnose and avoid BPs in variational quantum optimization. WBPs emerge when the entanglement of a local subsystem exceeds a certain threshold identified by the entanglement of a fully scrambled state.
In contrast to BPs, WBPs can be efficiently diagnosed using the few-body density matrices and we show that their absence is a sufficient condition for avoiding BPs. Based on the notion of WBPs, we propose an algorithm  that can be readily implemented on available NISQ devices. The algorithm employs \emph{classical shadow} estimation~\cite{huang2020predicting} during the optimization process in order to efficiently estimate the expectation value of the cost function, its  gradients, and the second R\'enyi entropy of small subsystems. 
The tracking of the second R\'enyi entropy enabled by the classical shadows protocol allows for an efficient diagnosis of the WBP both at the initialization step  and during the optimization process of variational parameters. If the algorithm encounters a WBP, as witnessed by a certain subregion having a sufficiently large R\'enyi entropy, the algorithm restarts the optimization process with a decreased value of the update step (controlled by the so-called learning rate). We support the proposed procedure by rigorous results and numerical simulations. The structure of the paper is as follows. 

In Sec.~\ref{sec:2} we introduce the theoretical framework and present our main results. In Sec.~\ref{sec:2A} we introduce the framework of variational quantum eigensolvers (VQEs). Sec.~\ref{sec:2B} introduces the phenomenon of BPs, which dramatically hinders the performance of VQEs. In Sec.~\ref{sec:wbp_algo} we demonstrate WBPs to be a precursor to BPs. We explain why and how WBPs can be efficiently diagnosed in experiments and contrast this with much harder task of detecting BPs. Finally we propose a modification to the VQE algorithm, which allows prevention of the occurance of BPs by avoiding WBPs.

In Sec.~\ref{sec:3} we present a bound for the expectation value of the second R\'enyi entropy in quantum circuits drawn from a unitary ensembles forming a 2-design. This bound allows us to use the second R\'enyi entropy, which is much easier to estimate, instead of the entanglement entropy. In Sec.~\ref{sec:3a} we provide a formal definition of WBPs according to the value of the second R\'enyi entropy of the subsystem and prove that the occurrence of a BP implies the occurrence of a WBP. From this argument it follows that the absence of a WBP precludes the occurrence of a BP. In addition, we provide an upper bound (whose proof is found in Appendix~\ref{appx:shadows}) for the measurement budget require in order to estimate a WBP using classical shadows. Finally, in Sec.~\ref{sec:3b} we demonstrate numerically how the avoidance of WBPs precludes the presence of a BP using the popular BP-free small-angle initialization~\cite{holmes2021connecting, haug2021capacity}.

In Sec.~\ref{sec:4}, we explore how BPs and WBPs emerge at different stages in the VQE optimization and perform a systematic performance analysis.
Next, in Sec.~\ref{sec:4a} we explore the relation of the learning rate and entropy growth for a single update of the VQE algorithm. We analytically and numerically illustrate how a large learning rate 
leads to an uncontrolled growth in subsystem entropies, essentially driving optimization to a WBP region.
In Sec.~\ref{sec:4b} we explore the performance of the WBP-free VQE algorithm in different settings for the Heisenberg model on a chain. Finally, in Sec.~\ref{sec:4c}, we show that our approach allows for the efficient convergence to both, area- and volume-law entangled ground states and compare it to layerwise optimization~\cite{skolik2020layerwise}, which is a popular heuristic for BP avoidance. This is illustrated using the Heisenberg model on a random 3-regular graph, additional results for the Sachdev-Ye-Kitaev (SYK) model can be found in the Appendix~\ref{appx:extra_numerics} which exhibits a nearly maximally entangled ground state.

Finally, in Sec.~\ref{sec:5} we summarize our results, discuss their implications, and outline open questions.

\section{Avoiding barren plateaus in variational quantum optimization}\label{sec:2}
In this section we first introduce  the framework of VQEs, i.e., the unitary ensemble, the cost functions, and the optimization algorithm, and discuss the BP problem.  After this, we present our main result -- a specific modification of the VQE that avoids the issue of BPs.
\subsection{Variational quantum eigensolver \label{sec:2A}}

The aim of the VQE, initially introduced by~\citeauthor{peruzzo2014VQE}, is to approximate the ground state $\ket{\text{GS}}$ of a Hamiltonian $H$ with a variational wave function $\ket{\psi(\bm{\theta})}$. A quantum computer is used to prepare the variational function via the action of a set of unitary gates, $\ket{\psi(\bm{\theta})} = U(\bm{\theta}) \ket{\psi_0}$, where $\ket{\psi_0}$ is the initial state that is typically assumed to be a product state. The variational parameters are then iteratively updated to minimize the expectation value of the Hamiltonian, also-called cost function $E(\bm{\theta})=\bra{\psi(\bm{\theta})} H \ket{\psi(\bm{\theta})}$.

We consider a unitary circuit $U(\bm{\theta})$ of the form of the so-called ``hardware-efficient" ansatz~\cite{kandala2017hardware}
\begin{equation}\label{eq:ruc}
    U(\bm{\theta}) = \prod_{l=1}^{p}  W_l\bigg( \prod_{i=1}^{N} R_l^i(\theta_l^i) \bigg ),
\end{equation}
where $\theta_l^i \in [-\pi, \pi)$ are $p N$ variational angles, concisely denoted as $\bm \theta$. In this expression the product goes over spatial dimension $i=1,\ldots,N$ that labels individual qubits and ``time dimension'', $l = 1,\ldots, p$ with $p$ specifying a number of layers, see Fig.~\ref{fig:1}~(a). We choose the single-qubit gates to be rotations $R_l^i(\theta_l^i)=\exp(-\frac{\im}{2} \theta_l^i G_{l,i})$ with random directions given by $G_{l,i} \in \{\sigma^x, \sigma^y, \sigma^y \}$. $W_l$ is an entangling layer that consists of two-qubit entangling gates represented by nearest-neighbor controlled-Z (CZ) gates with periodic boundary conditions, see Fig.~\ref{fig:1}~(a) for an illustration. 

\begin{figure*}[t]
    \centering
    \includegraphics[width=1.99\columnwidth]{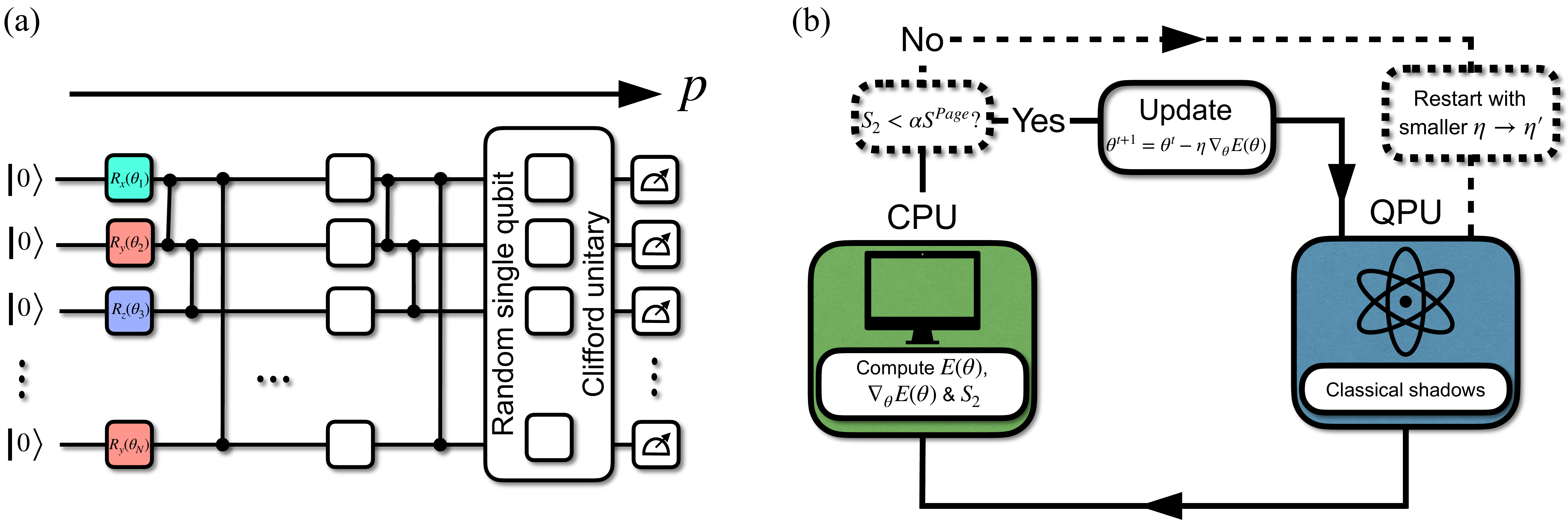}
    \caption{(a) Illustration of the variational quantum circuit $U(\bm{\theta}) \ket{0}$ that is considered in the main text followed by the shadow tomography scheme~\cite{huang2020predicting}. The variational circuit consists of alternating layers of single-qubit rotations represented as boxes and entangling CZ gates shown by lines. The measurements at the end are used to estimate values of the cost function, its gradients, and other quantities. (b) The original hybrid variational quantum algorithm shown by solid boxes can be modified without incurring significant overhead as is shown by the dashed lines and boxes. The modified algorithm tracks entanglement of small subregions and restarts the algorithm if it exceeds the fraction of the Page value that is set by parameter $\alpha$. The full algorithm is efficient; rigorous sample complexity bounds are provided in Appendix~\ref{appx:shadows}.}
    \label{fig:1}
\end{figure*}

We focus our study on $k$-local Hamiltonians $H$, defined as sum of terms each containing at most $k$ Pauli matrices. We take $k$ to be finite and fixed, while the number of qubits $N\gg k$. A particular example of a $2$-local Hamiltonian from the many-body physics is provided by the Heisenberg ($XXX$) model subject to a magnetic field
\begin{equation}\label{eq:heisenberg}
	H_{XXX} = \sum_{i, j \in V_\mathcal{G}} J \big(\sigma_i^z \sigma_{j}^z + \sigma_i^y \sigma_{j}^y + \sigma_i^x \sigma_{j}^x\big)
	+ h_z \sum_{i=1}^N \sigma_i^z,
\end{equation}

where $V_\mathcal{G}$ refers to the vertex set of the graph $\mathcal{G}$ and, couplings are fixed $J=h_z=1$. In our simulations we consider two different graphs: a ring corresponding to a one-dimensional (1D) chain with periodic boundary condition, and a random 3-regular graph. The $U(1)$ symmetry related to the conservation of the $z$ component of the spin under the action of $H$, as well as translational invariance present for chains with periodic boundary condition, can be explored to decrease the space of parameters by using a suitable gate set respecting this symmetry. However, for the sake of generality we focus on the hardware-efficient unitary ansatz defined in  Eq.~\eqref{eq:ruc}. 

Obtaining the energy expectation value $E(\bm{\theta}) =\bra{\psi(\bm{\theta})} H \ket{\psi(\bm{\theta})}$ requires measuring a subset or all qubits in the circuit as we schematically show in Fig.~\ref{fig:1}~(a). For our example of a $2$-local Hamiltonian on the 1D chain, the required measurements include the value of the $\sigma^z$ operator on all sites along with the $\sigma_i^a \sigma_{i+1}^a$ expectation values of all $i=1,\ldots N$ (periodic boundary condition is assumed, so that bits $1$ and $N+1$ are identified) and $a=x,y,z$. Finding the optimal parameters $\bm{\theta}^\star$ requires minimization of the Hamiltonian expectation value $E(\bm{\theta}^\star)=\min_{\bm{\theta}}E(\bm{\theta})$ performed by a classical computer. 

There is a plethora of sophisticated classical optimization algorithms that were applied to this minimization problem~\cite{ostaszewski2021structure, stokes2020quantum, adam, gacon2021simultaneous}. We use the plain gradient-descent (GD) algorithm due to its simplicity, which makes analytical considerations easier. A GD update step is given by
\begin{equation}\label{eq:gd}
	\bm{\theta}^{t+1} = \bm{\theta}^t - \eta \nabla_{\bm{\theta}} E(\bm{\theta}),
\end{equation}
where $\eta$ is the \textit{learning rate}, which controls the update magnitude. This update step is repeated until convergence of $E(\bm{\theta})$, which results from finding a (local) minimum of $E(\bm{\theta})$.

The resulting VQE algorithm is shown schematically in Fig.~\ref{fig:1}~(b) by solid lines. Following the initialization of the variational angles $\bm \theta$, that may be chosen to be real random numbers, the quantum computer is used to prepare the variational state and provide quantum measurement results. The classical computer uses the measurements to estimate the value of the cost-function and its gradient, and performs an update of the variational parameters controlled by the learning rate~$\eta$. 
 
\subsection{Barren plateaus and entanglement\label{sec:2B}}

Whilst the VQE described above is a promising framework for near-term quantum computing due to its modest hardware requirements, its performance may be ruined by the issue of barren plateaus~\cite{mcclean2018barren, cerezo2020cost, holmes2021connecting}. Specifically, the BPs are defined as regions in the space of variational parameters where  the variance of the cost function gradient (and consequently its typical value) vanishes exponentially in the number of qubits~\cite{mcclean2018barren}: 
\begin{equation}\label{eq:Var}
	\mathop{\rm{Var}}[\partial_{i, l} E(\bm{\theta}) ] \sim \mathcal{O}\left(\frac{1}{2^{2N}}\right). 
\end{equation}
 
\citeauthor{mcclean2018barren} were among the first to theoretically investigate BPs. They showed that the appearance of a BP can be related to the circuit matching the Haar random distribution up to the second moment. More precisely, they showed that BPs are a consequence of the unitary ensemble $\mathcal{E}\sim \{U(\bm{\theta})\} _{\bm{\theta}}$ forming a so-called 2-design~\cite{mcclean2018barren} (see Appendix~\ref{appx:t-design} for details and the definition of a $t$-design). To understand the different circuit depth at which BPs are encountered, the authors in Ref.~\cite{cerezo2020cost} introduced the concept of cost-function-dependent BPs. In particular, they argued that the emergence of BP occurs at different circuit depths, depending on the nature of the cost function. 

In contrast, for a so-called global cost function, exemplified by the fidelity, Ref.~\cite{cerezo2020cost} found that BPs already occur at very modest circuit depths $p\sim \mathcal{O}(1)$. The emergence of BP for the fidelity is naturally related to ``orthogonality catastrophe" in many-body physics: even a small global unitary rotation applied to the many-body wave function results in it becoming nearly orthogonal to itself. In terms of fidelity, this implies that it vanishes exponentially in the number of qubits. Moreover, most global state features -- such as expectation values of general operators, fidelities with general states and global purities -- cannot be efficiently accessed on NISQ devices, and are therefore not practical from an algorithmic point of view~\cite{flammia11direct,huang2020predicting,huang2021learning,chen2021exponential}. Therefore, in what follows we do not consider the global cost functions and corresponding BPs.

Local cost functions, that are the focus of the present work are characterized by a later onset of BPs. Specifically, for a $k$-local cost function where $k$ is fixed, the BPs will occur for circuit depth $p \sim \mathcal{O}({\rm poly}(N))$ that increases polynomially in system size~\cite{mcclean2018barren, cerezo2020cost}. In other words, for a large enough $p$ the VQE algorithm will also suffer from a BP already at the very first step of the GD optimization, provided random choice of variational angles $\bm \theta$. We also note that gradient-free optimization strategies do not circumvent the BP problem since the optimization landscape is inherently flat~\cite{arrasmith2021effect}.

The potential emergence of BPs at the initialization stage of the VQE and other algorithms spurred the investigation of different initializations strategies that avoid BPs. Until now, several BP-free initializations were considered in the literature. Ref.~\cite{grant2019initialization} suggests to initialize the circuit with blocks of identities, Ref.~\cite{skolik2020layerwise} suggests to optimize the ansatz layer by layer, and Ref.~\cite{dborin2021matrix} suggests to use a matrix product state ansatz that is optimized by a separate algorithm~\cite{cirac2020matrix} and map that to a quantum circuit. In this work we will focus on small single-qubit rotation as suggested in Ref.~\cite{holmes2021connecting}. 

More recently, it was observed that the entanglement entropy defined as a trace of the reduced density matrix, $S=-\tr \rho_A \ln \rho_A$ (where $\rho_A=\tr_B \rho$ is the reduced density matrix where $A$ is the subset of qubits that are measured and $B$ is the rest of the system) is another source for the occurrence of BPs~\cite{marrero2020entanglement}. The community has subsequently dubbed this kind of BP, \textit{entanglement-induced} BP~\cite{marrero2020entanglement, kim2021entanglement, wiersema2021measurement-induced, patti2021entanglement}. In this work, we will however show that entanglement-induced BPs and BPs for local cost functions, are in fact one and the same. Avoiding entanglement-induced BPs is equivalent to avoiding BPs for local cost functions, the details are presented in Sec.~\ref{sec:3}. 

Experimentally probing a BP is a hard task. 
The estimation of the  exponentially small gradient in Eq.~(\ref{eq:Var}) requires a number of measurements that is exponential in the number of qubits, and therefore invalidates any potential quantum speedup. Moreover, small values of gradient encountered in BP have to be distinguished from the case when gradient vanishes due to convergence to a local minimum. Experimentally diagnosing BPs via entanglement is also impractical. For example, quantum circuits that implement  2-design and thus lead to BPs for local cost functions are characterized by typical volume-law entanglement that approaches nearly maximal values. Checking volume-law entanglement scaling on any device is a formidable challenge.

In the process of variational quantum optimization, the majority of approaches to mitigate BPs apply to the initialization stage~\cite{grant2019initialization, verdon2019learning, volkoff2021largegradients} and not during the optimization. In Sec.~\ref{sec:4}, we illustrate the importance of BP mitigation during the optimization. This motivates the need to devise a BP mitigation strategy for the initialization and during the optimization procedure that is efficient. This procedure is discussed in the algorithm proposed below. 

\subsection{Weak barren plateaus and improved algorithm}\label{sec:wbp_algo}

In order to devise an efficient algorithm for BP-free initialization and optimization of the VQE we introduce the notion of WBPs. Specifically, for a Hamiltonian that is $k$-local, we define the WBP as the point where the second R\'enyi entropy $S_2=-\ln \tr \rho_A^2 $ of any subregion of $k$-qubits satisfies $S_2 \geq \alpha S^{\text{Page}}(k, N)$, where the Page entropy in the limit $k\ll N$ corresponds to the (nearly) maximal possible entanglement of subregion $A$,
\be\label{eq:full_page_asymptotic}
S^{\text{Page}}(k, N) \simeq k\ln 2 - \frac{1}{2^{N-2k+1}},
\ee
where we explicitly used that the Hilbert space dimension of region $A$ is $2^k$ and its complement $B$ has Hilbert space dimension $2^{N-k}$. 
The naive choice for the parameter $\alpha$ is $\alpha=1$. Given some \textit{a priori} knowledge of the entanglement structure of the target state $\ket{\text{GS}}$, the choice can however be more informed to help avoid large entanglement local minima, see Sec.~\ref{sec:3}. 

The notion of WBP is practical since it is defined by $k$-body density matrices, being much easier to access on a real NISQ device. The fact that the prevention of a WBP is sufficient for avoiding the BP may be understood by the intuition from quantum many-body dynamics and the process of thermalization or scrambling of quantum information. In the thermalization process the small subsystems are first to become strongly entangled, as is witnessed by the proximity of their density matrix to the infinite temperature density matrix. This intuition suggests that it is enough to keep in check the density matrices of small subsets of qubits. If their entanglement or other properties are far away from thermal, the system overall is still far away from the BP.

Practically, the WBP can be diagnosed using the shadow tomography scheme~\cite{huang2020predicting}. This scheme enables an efficient way of representing a classical snapshot of a quantum wave function on a classical computer. In essence, the shadow tomography replaces the measurements performed in the computational basis with a more general measurements, that turns out to be sufficient for reconstructing linear and non-linear function of the state, such as expectation values of few-body observables and second R\'enyi entropy of few-body reduced density matrices respectively. 

Relying on the shadow tomography, we propose the following modification of the VQE shown by dashed lines in Figure~\ref{fig:1}~(b). In essence, we suggest to use the tomography to \emph{simultaneously} measure the cost function value and the $k$-body second R\'enyi entropy. For the derivative we require an additional $2pN$ tomographies (two for each parameter) to compute the gradient exactly using the parameter shift rule~\cite{mitarai2018quantum, schuld2019evaluating}, a detailed derivation of the computational cost for each operation is presented in Appendix~\ref{appx:shadows}. Access to the second R\'enyi entropy allows prevention of the appearance of WBPs not only at the initialization step, but throughout the optimization cycle. 
The explicit algorithm works as follows.

\begin{algorithm}[H]
\caption{WBP-free optimization with classical shadows}
\label{algo:wbp}
\begin{algorithmic}[1]
\State Choose $\alpha$, default is $\alpha=1$ \Comment see Sec.~\ref{sec:3a} for details
\State Choose $\bm{\theta}$ such that $S_2< \alpha S^{\text{Page}}(k, N)$
\State Choose learning rate $\eta$
\Repeat \Comment see Appendix~\ref{appx:shadows} for details 
\State Obtain classical shadows $\hat{\rho}^{(t)} (\bm{\theta})$ 
\State Use them to compute $E(\bm{\theta})$, $\nabla_{\bm{\theta}}E(\bm{\theta})$ and $S_2 (\bm{\theta})$
\If{$S_2< \alpha S^{\text{Page}}(k, N)$}
\State $\bm{\theta} \leftarrow \bm{\theta}-\eta \nabla_{\bm{\theta}} E(\bm{\theta})$
\Else 
\State Start again with smaller $\eta \leftarrow \eta'$  
\EndIf
\Until{convergence of $E(\bm{\theta})$}
\end{algorithmic}
\end{algorithm}

If a WBP is diagnosed at the initialization, one may have to take a different initial value of the variational angles or change the initialization ensemble. These aspects are discussed in detail in Sec.~\ref{sec:3}. In addition, the WBP can occur in the optimization loop. This can be mitigated by keeping track of the second R\'enyi entropies in the optimization process. If the WBP condition is fulfilled, one must restart the algorithm with a smaller learning rate. In Sec.~\ref{sec:4} we discuss the optimization process in greater details. In particular, we show how the learning rate is related to the potential change in entanglement entropy, which implies that a smaller learning rate is generally better at avoiding WBPs. 

\section{Weak barren plateaus and initialization of VQE}\label{sec:3}

\subsection{Definition and relation to barren plateaus }\label{sec:3a}

As mentioned in the above, BPs for local cost functions are a consequence of the unitary ensemble $\mathcal{E}\sim \{U(\bm{\theta})\}_{\bm{\theta}}$ forming a $2$-design~\cite{mcclean2018barren, cerezo2020cost}, which leads to an exponentially vanishing gradient variance, i.e.,~a BP. What is important to note is that  the exponential decay is simply a witness of the emergence of a $2$-design. Another, equivalent witness is the second R\'enyi entropy, where we have the following.
\begin{theorem}{(2-design and R\'enyi entropy)} \label{thm:2-design}
 	If the unitary ensemble $\mathcal{E}\sim \{ U(\bm{\theta}) \}$ forms a 2-design, then %we have that 
 	for typical instances the second R\'enyi of the state $\rho_A$ concentrates around the Page value $$S^{\rm{Page}}(k, N) - \frac{1}{2^{N-2k+1}} \leq	\mathbb{E}_{\mathcal{E}}\big[ S_2(\rho_A) \big] \leq S^{\rm Page}(k, N),$$ for all subregions $A$ of size $k\ll N$. 
\end{theorem}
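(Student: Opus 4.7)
The plan is to reduce the theorem to a closed-form evaluation of the ensemble-averaged purity, which is accessible under the 2-design hypothesis, and then to sandwich $\mathbb{E}_\mathcal{E}[S_2(\rho_A)]$ between a Jensen lower bound and an entropy-monotonicity upper bound. The key observation is that $\tr\rho_A^2 = \exp[-S_2(\rho_A)]$ is a degree-$(2,2)$ polynomial in $U(\bm\theta)$ and $U(\bm\theta)^\dagger$, so the 2-design property allows me to replace the ensemble average by the Haar average. Using the swap trick together with the standard two-copy Weingarten identity on $U(2^N)$, I obtain the well-known closed form
\begin{equation*}
\mathbb{E}_\mathcal{E}\bigl[\tr\rho_A^2\bigr] = \frac{d_A + d_B}{d_A d_B + 1}, \qquad d_A = 2^k,\ d_B = 2^{N-k}.
\end{equation*}

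For the lower bound, since $x \mapsto -\ln x$ is convex on $(0,\infty)$, Jensen's inequality yields $\mathbb{E}_\mathcal{E}[S_2(\rho_A)] \geq -\ln \mathbb{E}_\mathcal{E}[\tr\rho_A^2]$. Expanding in the regime $k \ll N$, the ratio factorizes as $(d_A+d_B)/(d_A d_B+1) = 2^{-k}(1+2^{2k-N})(1+O(2^{-N}))$, so the Jensen bound equals $k\ln 2 - 2/2^{N-2k+1} + O(2^{-N})$, which matches $S^{\mathrm{Page}}(k,N) - 1/2^{N-2k+1}$ at the stated precision. For the upper bound, I would combine the R\'enyi monotonicity $S_2(\rho)\leq S_1(\rho)$, which holds pointwise for every density matrix, with Page's theorem $\mathbb{E}[S_1(\rho_A)] = S^{\mathrm{Page}}(k,N)$ for a bipartite Haar-random state, obtaining $\mathbb{E}[S_2(\rho_A)] \leq \mathbb{E}[S_1(\rho_A)] = S^{\mathrm{Page}}(k,N)$.

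The main obstacle is the upper bound: unlike the purity, the von Neumann entropy depends on all moments of $\rho_A$, so the 2-design hypothesis alone does not literally determine $\mathbb{E}[S_1]$. A clean argument therefore has to either (i) assume the ensemble is close enough to Haar that Page's formula is inherited up to an exponentially small correction, or (ii) replace this step by the elementary bound $S_2 \leq \ln d_A = k\ln 2$, which agrees with $S^{\mathrm{Page}}(k,N)$ up to a correction of order $1/2^{N-2k+1}$ and hence suffices at the precision stated in the theorem. The lower bound, by contrast, is a one-line Jensen application once the exact averaged purity is in hand.
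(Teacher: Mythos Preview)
Your approach mirrors the paper's proof in Appendix~C almost line for line: Jensen's inequality applied to the convex map $x\mapsto -\ln x$ together with the 2-design purity $\mathbb{E}_\mathcal{E}[\tr\rho_A^2]=(d_A+d_B)/(d_Ad_B+1)$ for the lower bound, and the pointwise inequality $S_2\leq S_1$ combined with Page's formula for the upper bound. The obstacle you flag for the upper-bound step is genuine and is in fact not resolved in the paper either---the paper simply asserts that $\mathbb{E}_\mathcal{E}[S_1(\rho_A)]$ is bounded by $S^{\mathrm{Page}}$ without explaining why the 2-design hypothesis (as opposed to full Haar randomness, for which Page's formula is an equality) suffices. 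Your alternative~(ii), the elementary bound $S_2\leq \ln d_A=k\ln 2$, is the clean fix: it holds for every state and hence for every ensemble, and it coincides with $S^{\mathrm{Page}}(k,N)$ up to the same $1/2^{N-2k+1}$ correction already appearing in the statement, so it delivers the claimed concentration at the stated precision (with the upper endpoint shifted by one unit of $2^{-(N-2k+1)}$).
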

\noindent These results are known in the literature, and in the context of random quantum circuits, can be found in Refs.~\cite{popescu2006entanglement, oliveira2007, dahlsten2007typicalentanglement}. However, for completeness we also provide a proof in Appendix~\ref{app:ent}.

The theorem above implies that a large amount of entanglement naturally follows from the similarity between the considered circuit and a random unitary (2-design). Such similarity also gives rise to the vanishing variance of local cost function gradients that define BPs. Therefore, so-called entanglement-induced BPs~\cite{marrero2020entanglement} and BPs for local cost functions are the same. 
In fact, entanglement provides an intuitive picture for the emergence of BPs and its circuit depth dependence. Every entangling layer in the circuit typically increases entanglement of the resulting wave function, until it saturates to its maximal value for any subregion of $k$-qubits at a circuit depth $p\sim \mathcal{O} (\text{poly} (N))$. If the second R\'enyi entropy for half of the subsystem $k=N/2$ has saturated, it has saturated for all smaller subsystem sizes and is thus a sufficient check for a BP. Computing the second R\'enyi is however typically exponentially hard in subsystem size on NISQ devices (for single-copy access this was recently proven in Ref.~\cite{chen2021exponential,huang2021learning}). It is therefore only practical to check a small subregion where $k$ is small and independent of system size.  

The above considerations naturally lead us to introduce the notion of WBPs as a modification of the BP that is computationally efficient to diagnose on NISQ devices. More formally we have as follows.
\begin{definition}{(Weak barren plateaus)}\label{def:1}
 	Let $H$ be an $N$-qubit Hamiltonian, and $A$ is a region containing $k$ qubits. We define a weak barren plateau by the second R\'enyi entropy of the reduced density matrix $\rho_A$ satisfying $S_2 \geq \alpha S^{\text{\rm Page}}(k,N)$ with $\alpha \in [0, 1)$.  
\end{definition}
This definition works for any $k$, however it is reasonable to use $k$ that corresponds to the number of spins involved in interaction terms in the Hamiltonian $H$ since it provides a natural length scale. Moreover, in such a case the reduced density matrix of subregion with $k$ spins contains all necessary information needed to extract the expectation values of Hamiltonian terms localized inside this region. 

While a WBP is a necessary condition for a BP, it is however not sufficient (which motivates the term \textit{weak}). From a practical perspective we are actually interested only in avoiding a BP. For this, WBPs provide a powerful tool, since the following holds.
\begin{corollary}\label{corollary}
    If we find a particular subregion $A$ such that $\rho_A$ does not satisfy the weak barren plateau condition, i.e.\ Definition 2, it is on average also not in a barren plateau where the variance is exponentially small.
\end{corollary}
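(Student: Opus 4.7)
The plan is to prove the corollary by contraposition, leveraging Theorem~\ref{thm:2-design} together with the established equivalence between barren plateaus for local cost functions and the ensemble $\mathcal{E}\sim\{U(\bm{\theta})\}_{\bm{\theta}}$ matching the Haar measure up to the second moment. The key observation is that if $\mathcal{E}$ behaves like a $2$-design, then Theorem~\ref{thm:2-design} constrains the reduced-state second R\'enyi entropy of \emph{every} $k$-qubit subregion to lie near the Page value. Exhibiting a single subregion $A$ whose averaged entropy falls below $\alpha S^{\text{Page}}(k,N)$ is therefore incompatible with $\mathcal{E}$ being a $2$-design, and hence incompatible with the exponentially small variance (\ref{eq:Var}) that defines a BP.

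More concretely, I would proceed in three steps. First, I would recall the forward direction already used in the paper: an ensemble $\mathcal{E}$ whose first two moments coincide with the Haar measure yields the BP scaling $\mathrm{Var}[\partial_{i,l}E(\bm\theta)]\sim\mathcal{O}(2^{-2N})$ for $k$-local cost functions, as established in~\cite{mcclean2018barren, cerezo2020cost}. Second, I would invoke Theorem~\ref{thm:2-design} to conclude that, under the same $2$-design hypothesis, for every $k$-qubit region $A$ with $k\ll N$,
\begin{equation}
    \mathbb{E}_{\mathcal{E}}\!\left[S_2(\rho_A)\right] \;\geq\; S^{\text{Page}}(k,N) - \frac{1}{2^{N-2k+1}}.
\end{equation}
Third, fixing any $\alpha\in[0,1)$, the right-hand side strictly exceeds $\alpha\,S^{\text{Page}}(k,N)$ once $N$ is moderately larger than $k$, so the averaged WBP condition must hold on every small subregion. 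Contrapositively, if one locates any region $A$ satisfying $\mathbb{E}_{\mathcal{E}}[S_2(\rho_A)] < \alpha\,S^{\text{Page}}(k,N)$, then $\mathcal{E}$ cannot be a $2$-design, and the exponential variance scaling (\ref{eq:Var}) is ruled out on average, which is precisely the statement of the corollary.

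The main subtlety I expect is reconciling the "typical instance" wording of Theorem~\ref{thm:2-design} with the "on average" phrasing of the corollary and of Eq.~(\ref{eq:Var}). The cleanest way to handle this is to keep the argument entirely at the level of ensemble expectations: both the BP criterion and the WBP criterion are already averaged quantities, so the implication runs directly between them and no auxiliary concentration inequality is needed. A secondary point to verify is that the separation $(1-\alpha)S^{\text{Page}}(k,N) - 2^{-(N-2k+1)}$ is strictly positive for the regime of interest; for the natural choice $\alpha=1$ one must be slightly more careful and replace the asymptotic Page formula (\ref{eq:full_page_asymptotic}) with its exact counterpart, but the logical skeleton of the proof is otherwise unchanged.
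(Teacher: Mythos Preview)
Your proposal is correct and takes essentially the same approach as the paper: the paper's proof is the single line ``This assertion immediately follows from negating Theorem~\ref{thm:2-design},'' and your three-step argument is precisely that contrapositive spelled out in detail, together with the BP $\Leftrightarrow$ 2-design identification the paper already adopts in Sec.~\ref{sec:3a}. Your additional remarks on the $\alpha=1$ edge case and on keeping everything at the level of ensemble averages are careful refinements that go beyond what the paper records, but the logical skeleton is identical.
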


\begin{proof}
 This assertion immediately follows from negating Theorem~\ref{thm:2-design}.
\end{proof}

The corollary above formalizes the intuition behind the dynamics of entanglement in a circuit: if the state restricted to the smaller subsystem has not scrambled, then neither has the state restricted to a larger subregion. In practice, using classical shadows we can efficiently check one subregion of size $k$ with a total measurement budget
\begin{equation}\label{eq:purity_total_cost}
    T \geq \frac{4^{k+1}  \tr \rho_A^2}{\epsilon^2 \delta},
\end{equation}
where $\epsilon$ is a desired accuracy and $\delta$ is a failure probability (over the randomized measurement process). Parameters $\epsilon$ and $\delta$ do not depend on the number of qubits, whereas the factor $\tr \rho_A^2$ is upper bounded by one for weakly entangled states and can be as small as $2^{-k}$ when entanglement is large.
Moreover, checking all size $k$ subregions incurs an additional overhead of only $k \ln N$. A derivation of this result is presented in Appendix~\ref{appx:shadows}, see Eq.~\eqref{eq:sample-complexity-purity}. Provided that $k$ is small and does not scale with system size, $N$, this can be efficiently implemented on NISQ devices.

If any of these subregions avoids the WBP condition, we are guaranteed to also avoid an actual BP. For simplicity, in the numerical results below we check for the WBP condition for a particular region containing the first $k$ qubits, i.e., \ $A=\{ 1, \cdots, k \}$.

This argument is also intuitive to see by considering a causal cone (blue region) that indicates the extent of the so-called scrambled region (i.e., extend of a subregion with entropy close to the maximal value) in the circuit, see Fig.~\ref{fig:2}~(a). Such a scrambled region grows with every consecutive entangling layer $W_l$ (see Eq.~(\ref{eq:ruc})). When this region extends beyond $k$ qubits, the WBP is reached (left orange dashed line). Later, when the ``scrambling lightcone'' has extended to the full system, the BP is reached (right orange dashed line). Once the BP is reached all smaller regions are also fully entangled and will satisfy the WBP condition on average.

\begin{figure}[tb]
    \centering
    \includegraphics[width=0.98\columnwidth]{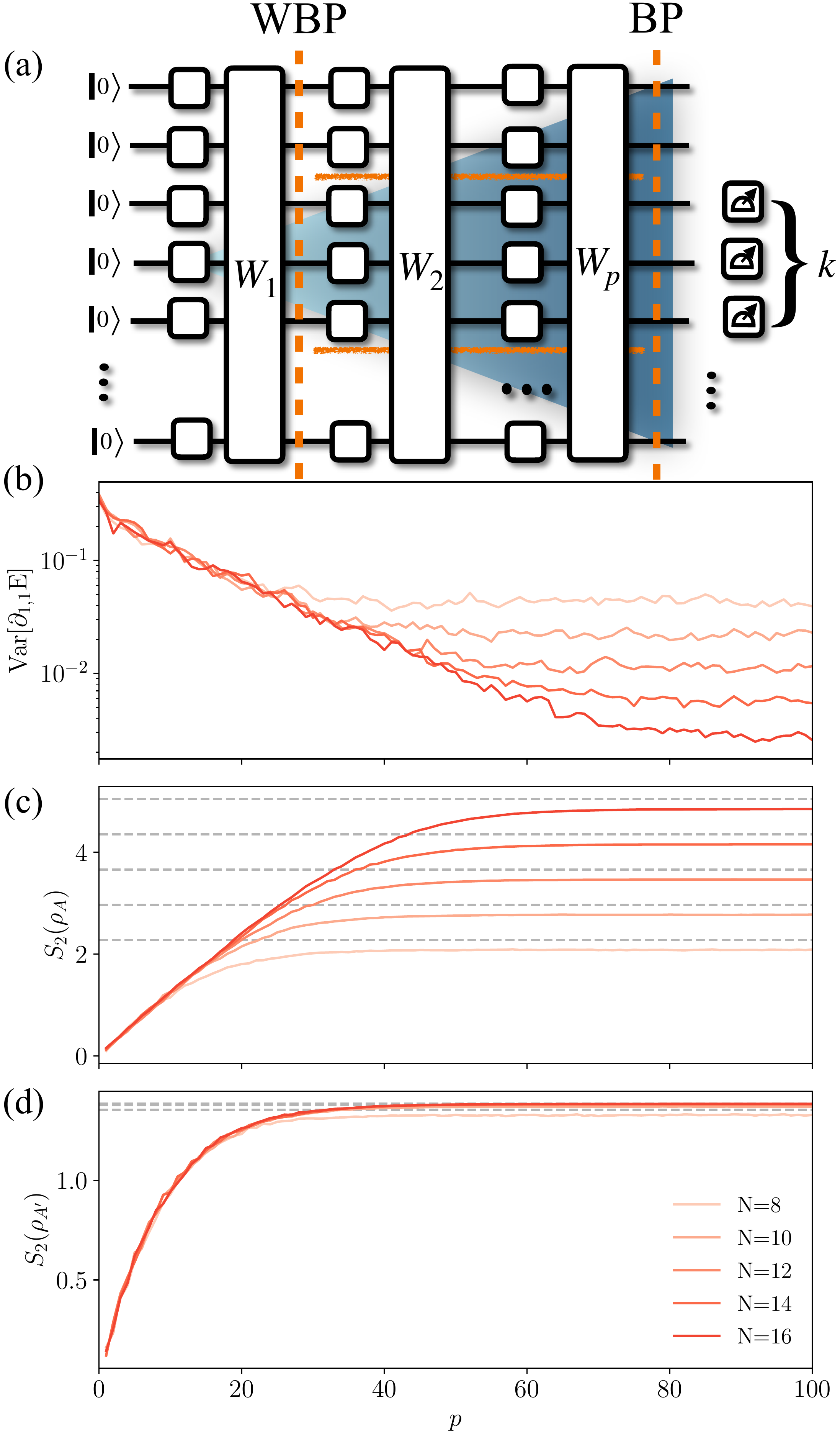}
    \caption{(a) Sketch of the circuit, where the blue color shows the scrambling lightcone. The lightcone first extends over $k$ qubits, where the WBP occurs, and for larger circuit depths extends to the full system size where the BP occurs. (b) The saturation of the gradient variance $\text{Var}[\partial_{1,1} E]$ and (c) saturation of the bipartite second R\'enyi entropy $S_2(\rho_A)$ of the region $A$ consisting of qubits $1,\ldots, N/2$ nearly to the Page value happen at the similar circuit depths $p$, that increases with the system-size $N$. (d) In contrast, the saturation of the second R\'enyi for two qubits ($A'=\{1,2\}$) is system size independent, illustrating that WBP precedes the onset of a BP. Data is averaged over $100$ random initializations. Gradient variance is computed for the local term $\sigma_1^z \sigma^z_2$, typically used in BP illustrations. Gradient variance for the full Heisenberg Hamiltonian, Eq.~(\ref{eq:heisenberg}), looks similar.}
    \label{fig:2}
\end{figure}

Fig.~\ref{fig:2} provides a numerical illustration for the Corollary~\ref{corollary} stated above. We use the hardware-efficient circuit, presented in Eq.~(\ref{eq:ruc}), and compute the gradient variance and second R\'enyi entropy as a function of circuit depth $p$ for different system sizes $N$. We fix $\ket{\psi_0} = \ket{0}$ as the initial state, which is simply all qubits in the zero state. Panel (b) shows the exponential decay of the gradient variance that is usually used to diagnose a BP. Panel (c) shows the corresponding bipartite second R\'enyi entropy. We see that it indeed approaches the Page value (gray dashed line). The Page value is not fully reached since we are considering the second R\'enyi instead of the von Neumann entanglement entropy, this difference however becomes negligible once the subsystem size is decreased. This numerically illustrates that when the $2$-design is reached both the gradient variance and bipartite second R\'enyi entropy have converged. In panel (d) we consider a smaller region of two qubits and see that the second R\'enyi for this region saturates to its maximal value at a significantly lower circuit depth. This illustrates the emergence of the WBP that precedes the onset of the BP after a few more entangling layers. Before the WBP is reached, gradients are well behaved and do not decrease exponentially with the system size. 

Finally, we address the effects of the control parameter $\alpha$, that enters in Definition~\ref{def:1} of the WBP. The naive choice is $\alpha=1$, which means that a WBP is reached if the subregion is maximally entangled with the rest of the system. However, in the case when some a priori knowledge about the entanglement properties of the target state $\ket{\text{GS}}$ is available, it can be used to set a smaller value of $\alpha$. If, for instance, the ground state is only weakly entangled, a choice of $\alpha \ll 1$ may be appropriate. In this way Algorithm~1 in Sec.~\ref{sec:wbp_algo} can also help in avoiding convergence to highly entangled local minima. We discuss this in more detail in Sec.~\ref{sec:4b}.  

\subsection{Illustration of WBP-free initialization}\label{sec:3b}

In order to illustrate the notion of WBP in a more specific setting we apply it to the initialization process of the VQE. Specifically, we focus on the family of initializations that was proposed earlier in order to avoid the issue of BPs~\cite{holmes2021connecting, haug2021capacity}. The one-parametric family of initializations restricts the single-qubit rotation angles from ansatz Eq.~(\ref{eq:ruc}) as $\theta_l^i \in \epsilon_{{\theta}} [-\pi, \pi)$, where  $\epsilon_{{\theta}} \in [0, 1)$ is the control parameter. This strategy allows the onset of the BP to be delayed to arbitrary circuit depths by tuning $\epsilon_{{\theta}}$ accordingly. 

Similarly, it allows the onset of WBPs to be delayed. Depending on the parameter $\epsilon_{{\theta}}$ one can afford a deeper circuit without encountering a WPB in the initialization when compared to the full parameter range ($\epsilon_{{\theta}}=1$). It is straightforward to see that for $\epsilon_{{\theta}}=0$, the ansatz is WBP free for all circuit depths. Indeed, in the absence of the single-qubit rotations, the entangling gates in $W_l$ do not create any entanglement [since the CZ gates used in Eq.~(\ref{eq:ruc}) are diagonal in the computational basis], leaving $\ket{0}$ invariant. Note that, for example, the \textit{identity block} initialization, proposed by \citeauthor{grant2019initialization} works in a similar way in that the unitary is constructed such that it also implements the identity and one is equally left with the zero state.

\begin{figure}[tb]
    \centering
    \includegraphics[width=\columnwidth]{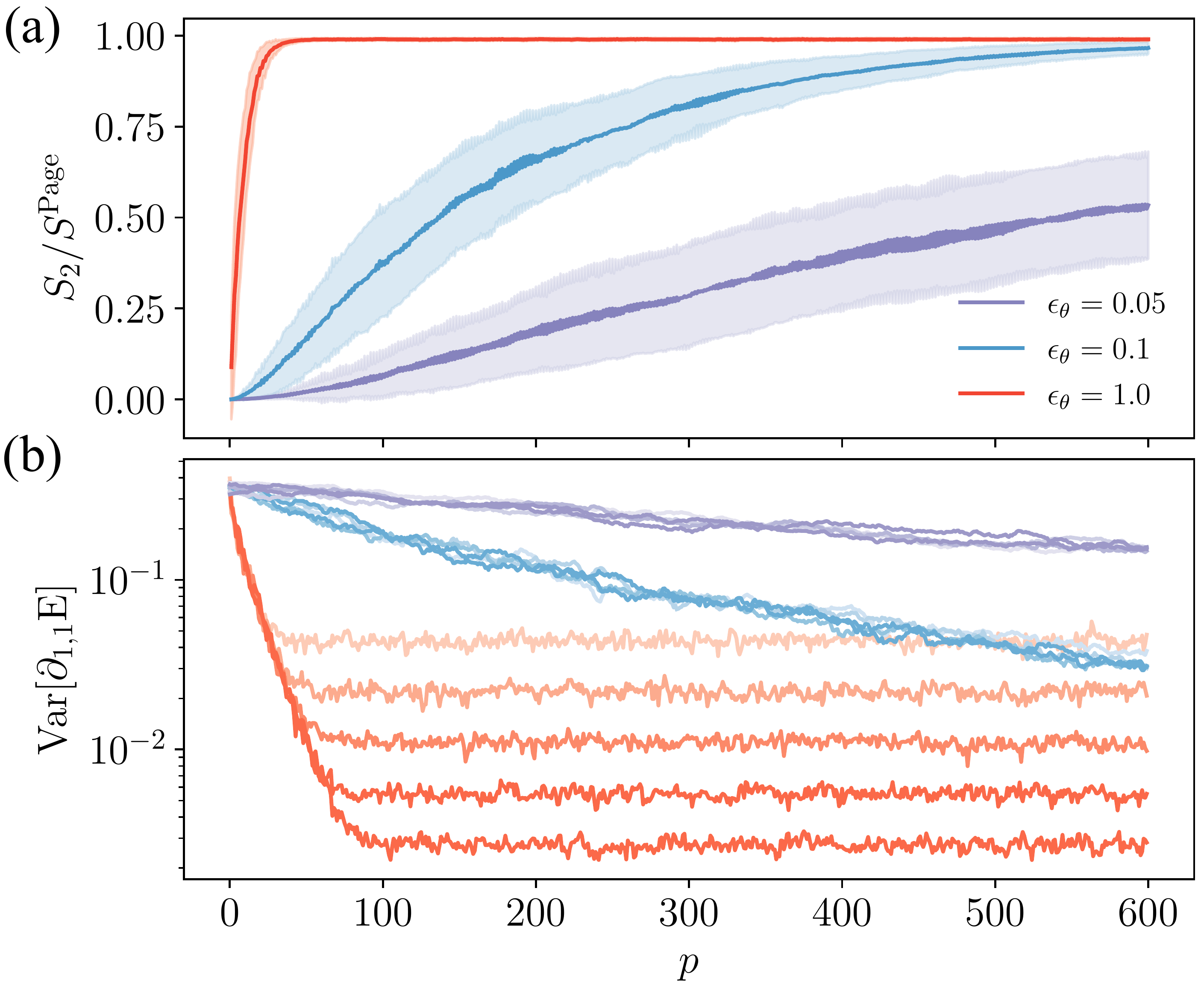}
    \caption{(a) Decreasing parameter $\epsilon_{{\theta}}$ from 1 slows down the growth of the second R\'enyi entropy with the circuit depth $p$. The chosen region contains two qubits.  (b) The encounter of BP in the variance of the gradient of the cost function is visible only for the case $\epsilon_\theta=1$, and it is preceded by the onset of a WBP. We use a system size of $N=16$ for (a) and $N=8, \cdots, 16$ for (b), color intensity corresponds to system size, same as in Fig.~\ref{fig:2}. Data is averaged over $100$ random instances, variance is for the local term $\sigma_1^z \sigma_2^z$.}
    \label{fig:3}
\end{figure}

In Fig.~\ref{fig:3} we numerically illustrate the influence of $\epsilon_{{\theta}}$ on the growth of entanglement and its relation to the gradient variance.
Panel (a) illustrates the growth of the second R\'enyi entropy in the circuit for three different small-angle parameters $\epsilon_{{\theta}}$ and panel (b) shows the corresponding gradient variance. Outside of the WBP the gradient variance vanishes at most polynomially in system size $N$. This illustrates that the avoidance of a WBP is sufficient for avoiding a BP and thus allows for a simple strategy for constructing BP-free initializations. 

\section{Entanglement control during optimization}\label{sec:4}

\subsection{Bounding entanglement increase at a single optimization step}\label{sec:4a}

In Sec.~\ref{sec:2} we presented how the general VQE can be extended with minimal overhead to avoid WBPs in the optimization procedure. The learning rate, as presented in Algorithm 1, hereby plays a crucial role. A smaller learning rate, as observed in Fig.~\ref{fig:1} (c)-(e) is more likely to avoid a WBP. To understand this phenomenological observation on more rigorous grounds, let us consider a sufficiently deep circuit (with a polynomial number of layers in system size), so that the optimization landscape is dominated by WBPs. Careful selection of the parameters allows for an initialization outside of a WBP. However, to remain in the WBP-free region, the optimization has to be performed in a controlled manner, such that the optimizer does not leave the region of low entanglement due to large learning rate and does not end in a WBP. 

Since WBPs are defined in terms of the second R\'enyi entropy $S_2$, we need to bound the change in $S_2$ between iteration steps $t$ and $t+1$. For practical purposes, we instead use the purity ($\tr \rho^2_A = e^{-S_2}$). The change in purity is upper bounded by~\cite{renyis_continuitybound}
\begin{equation}\label{eq:renyis_continuitybound}
  \left|\tr \rho^2_A(t+1) - \tr \rho^2_A(t)\right| \leq 1-(1-T_A(t))^2- \frac{T^2_A(t)}{2^k-1},  
\end{equation}
where $T_A(t) \equiv T(\rho_A(t), \rho_{A}(t+1))$ is the trace distance between the reduced density matrices at iteration steps $t$ and $t+1$, and we assume that region $A$ has $k$ qubits. 

Assuming that the states at consecutive update steps of gradient descent are pertubatively close (see Appendix~\ref{appx:perturb_learning} for details), as measured by the trace distance, one can show that
\begin{equation}\label{eq:trace_learning}
	T(\rho_A(t+1), \rho_A(t)) \lesssim \sqrt{\frac{\eta^2}{4} (\nabla_{\bm{\theta}}E)^T \mathcal{F}(\bm{\theta}) \nabla_{\bm{\theta}}E},
\end{equation}
where $\mathcal{F}_{i,j}(\bm{\theta})=4\mathop{\rm{Re}}[\langle \partial_i \psi| \partial_j \psi \rangle-\langle \partial_i \psi|\psi\rangle \langle \psi| \partial_j \psi \rangle]$ is the quantum Fisher information matrix (QFIM)~\cite{meyer2021fisher} and $\eta$ is the learning rate. Inequalities~\eqref{eq:renyis_continuitybound}-\eqref{eq:trace_learning} imply that the learning rate $\eta$ can be used to limit the maximal possible change of the purity.\footnote{A similar continuity bound which does not require the QFIM can be found in terms of the maximum operator norm of the gate generators. We acknowledge Johannes Jakob Meyer for this remark.} Provided that the change in purity is sufficiently small, the Taylor expansion can be used to argue that the corresponding change in the second R\'enyi entropy $S_2$, related to the purity as $e^{-S_2} = \tr \rho_A^2$, also remains controlled. Therefore, the choice of an appropriately small learning rate can guarantee the avoidance of a WBP at $t+1$, provided the absence of one at $t$.

\begin{figure}[t]
    \centering
    \includegraphics[width=\columnwidth]{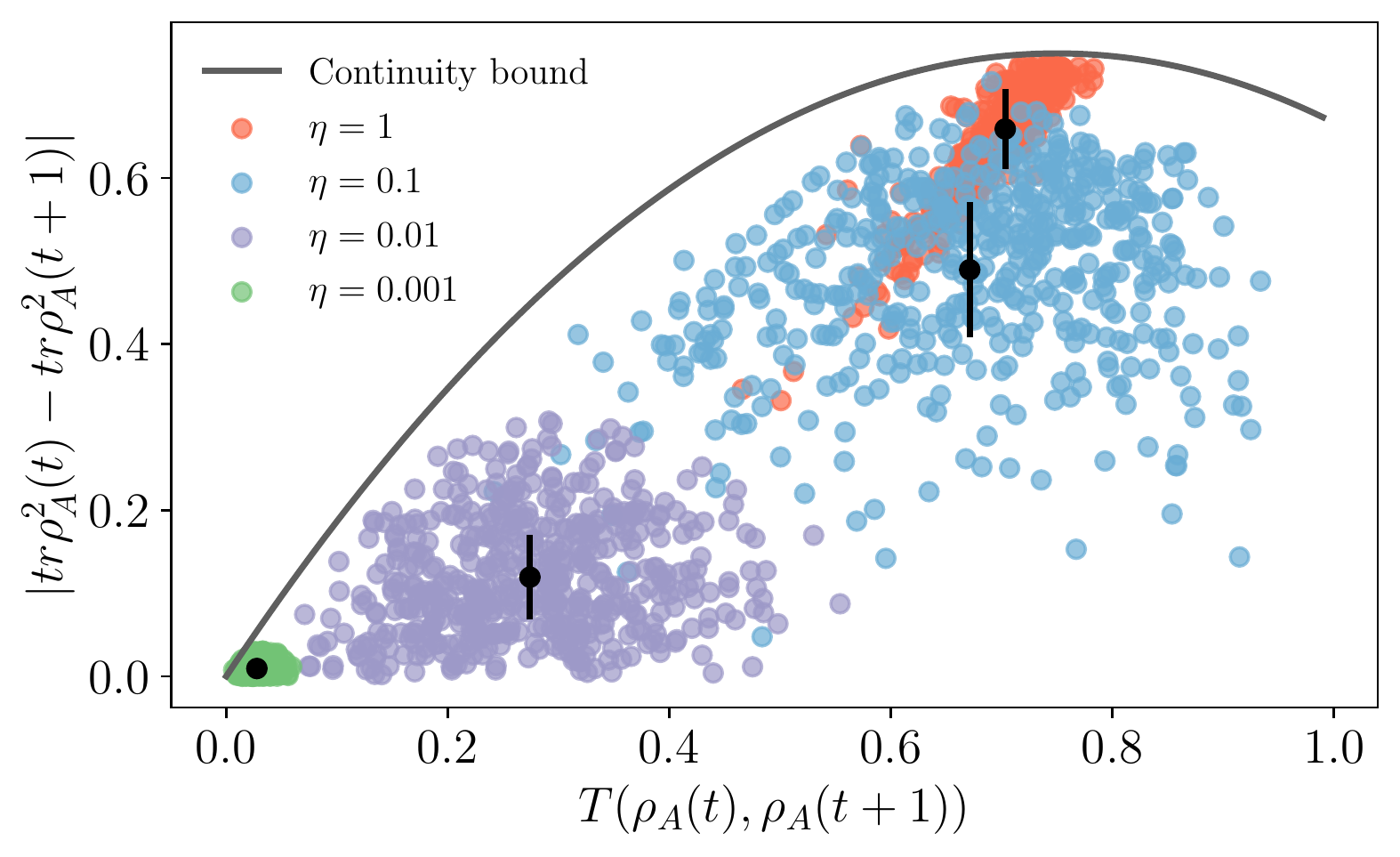}
    \caption{We numerically illustrate the continuity bound Eq.~(\ref{eq:renyis_continuitybound}) and its relation to the learning rate $\eta$ for $t=0$, i.e. at the beginning of the optimization schedule. This shows that one should be careful with the choice of the learning rate since a large learning rate leads to a big change in the trace distance and change in purity. We use a system size of $N=10$ and a random circuit with circuit depth $p=100$ and small qubit rotations ($\epsilon_\theta=0.05$) to generate a BP-free initialization. Data is averaged over $500$ random instances.}
    \label{fig:4}
\end{figure}

To illustrate the bound numerically, we prepare an initialization outside of the  WBP using a small angle parameter $\epsilon_{\bm{\theta}}$ and compute the change in the purity $\tr \rho_A^2$ after one GD update step for different learning rates $\eta$. The results of this procedure for four different learning rates are shown in Fig.~\ref{fig:4}. We see that larger learning rates correspond to a bigger change in purity and are thus more prone to encounter a WBP. At the same time, all data points are below the theoretical bound. While up to the best of our knowledge the bound Eq.~(\ref{eq:renyis_continuitybound}) is not proven to be tight, we observe that points corresponding to the extreme learning rates closely approach the theoretical line.

Using Eq.~\eqref{eq:trace_learning}, the bound can be efficiently approximated on NISQ hardware: the QFIM can be estimated efficiently on a quantum device using techniques suggested in Ref.~\cite{gacon2021simultaneous} or Ref.~\cite{rath2021quantum} using classical shadows. For the computation of the gradient one can use the parameter shift rule~\cite{mitarai2018quantum, schuld2019evaluating} also with shadow tomography. The expression can thus be efficiently evaluated on a real device and used together with the continuity bound to estimate a suitable learning rate $\eta$. However, in practice this might not be needed and simply following Algorithm 1 could be more efficient and easier to implement. 
\subsection{Optimization performance with learning rate}\label{sec:4b}

\begin{figure}[t]
    \centering
    \includegraphics[width=\columnwidth]{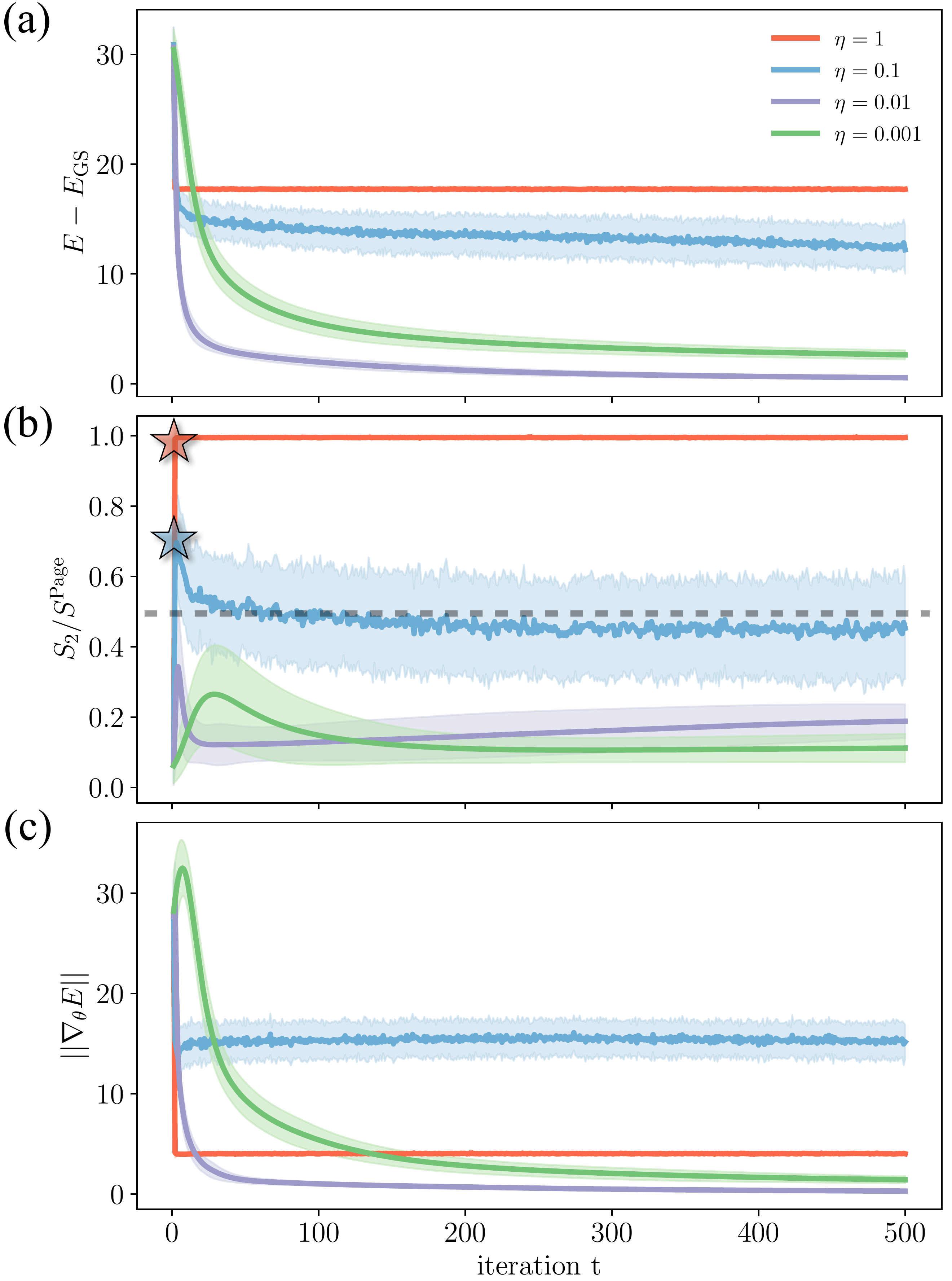}
    \caption{(a-c) The application of the proposed algorithm to the problem of finding the ground state of the Heisenberg model. For large learning rates $\eta=1$ and $0.1$ (red and blue lines) the optimization gets into a large entanglement region as is shown in (b), indicated by colored stars, forcing the restart of the optimization with smaller value of $\eta$. For $\eta=0.01$ the algorithm avoids large entanglement region and gets a good approximation for the ground state. Finally, setting even smaller learning rate (green lines) degrades the performance. The normalized second R\'enyi entropy of the true ground state is $S_2/S^{\text{Page}} (k, N)  \approx 0.246$. (c) Shows the corresponding gradient norm. A small gradient norm equally corresponds to the BP and the good local minima found with $\eta=0.01$ and $0.001$. We use a system size of $N=10$, subsystem size $k=2$, and a random circuit (see Eq.~(\ref{eq:ruc})) with circuit depth $p=100$ and small qubit rotations ($\epsilon_{{\theta}}=0.05$) to generate a BP-free initialization. Here we choose $\alpha=0.5$ indicated by the gray dashed line, see the last paragraph of Sec.~\ref{sec:3a} for a discussion on the choice of $\alpha$. Data is averaged over $100$ random instances.}
    \label{fig:5}
\end{figure}

Finally, we illustrate Algorithm 1 in practice. To this end we first prepare a WBP-free initial state using small qubit rotation angles and compare the performance of GD optimization with different learning rates.  If we start with a large learning rate, $\eta=1$, corresponding to red lines in Fig.~\ref{fig:5}~(a)-(c), we see that the energy expectation value in Fig.~\ref{fig:5}~(a) rapidly (within one or two update steps) converges to a value far away from the target ground state energy $E_{\text{GS}}$. At the same time, panel (b) reveals that this can be attributed to an onset of a WBP, as the second R\'enyi entropy spikes up to the Page value. Finally, panel (c) shows that the gradient norm also is convergent, though at non-zero value. We attribute this to the fact that the system gets trapped in the WBP region. 

As suggested by Algorithm 1, we thus decrease the learning rate to $\eta=0.1$ and start again. This time a WBP is avoided, the algorithm however gets stuck in a local minimum with large entanglement entropy. In this instance a choice of parameter $\alpha$ that defines an onset of a WBP in Def.~\ref{def:1} being smaller than one may be beneficial. For instance, setting $\alpha=0.5$ could help avoiding the suboptimal local minima characterized by large entanglement, see gray dashed line in Fig.~\ref{fig:5}~(b). Note that the large gradient persistent after many iterations for the blue line in Fig.~\ref{fig:5}~(c) may also indicate that the learning rate is chosen too large for the width of the local minima.

Provided that our algorithm uses $\alpha=0.5$, the system would satisfy a WBP condition even for learning rate $\eta=0.1$, forcing us to restart the algorithm with an even smaller learning rate. Setting $\eta=0.01$, we see that the algorithm is now able to converge very close to the true ground state energy (violet line in Fig.~\ref{fig:5}~(a)-(c)). In particular, the norm of the gradient assumes the smallest value among all learning rates. We note, that the further decrease of the learning rate (i.e.,\ to $\eta=0.001$) degrades the performance of GD. While WBPs are not encountered during the optimization process, the GD optimization converges slower within the considered number of iterations and to a larger energy expectation value. This highlights the fact that it is best to choose the highest possible learning rate, that still avoids a WBP. We speculate, that an optimization strategy that adapts the learning rate at each optimization step would give the best performance, though testing this assumption is beyond the scope of the present work. 

\subsection{Classical simulatability and performance comparison}\label{sec:4c}

Now that we have illustrated the procedure outlined in Algorithm 1 in detail, let us comment on the restrictions that our algorithm imposes, its relation to classical simulatability and finally compare our method with other common means for mitigating BPs.

To avoid WBPs and thus BPs we require that the second R\'enyi entropy of a small subregion is less than a fraction $\alpha$ of the Page value, where $\alpha\in(0, 1]$ and the default choice is $\alpha=1$. While this restriction does place a limitation on the entanglement generated by the circuit for a region of k qubits, it does not imply classical simulatabilty of the circuit. Indeed, it is the scaling of the entanglement entropy with system size that  is important for classical simulatability of a quantum system. Only in the special case when the entanglement entropy of the quantum state scales poly-logarithmically with the number of qubits, we can simulate the states on a classical computer in polynomial time~\cite{vidal2003efficientclassical, nest2007classicalsimulation, brandao2013arealaw}. In contrast, the criteria for WBP, Def.~\ref{def:1} is generally consistent with volume-law entanglement as we illustrate below, thus allowing our algorithm to be applied to systems that cannot be efficiently simulated on a classical computer. 

Here we focus on two types of systems: namely systems where the ground state satisfies area law, which implies that the entanglement entropy of an arbitary bipartition of the state scales with the size of the boundary $S(\rho_A)\sim |\partial A|$, as well as volume law, which implies that it scales with the volume, $S(\rho_A)\sim |A|$ (see Ref.~\cite{eisert2010colloquium} for a review on these concepts). 
For area-law states in 1D the entanglement entropy is constant and therefore allows for an efficient classical representation using techniques such as matrix product states~\cite{DMRG_in_the_age_of_MPS}. The 1D Heisenberg model, considered in the previous subsection, is an example for such a system.

The Heisenberg model, however, can be made hard to simulate classically by considering a random-graph geometry illustrated in Fig.~\ref{fig:6}~(a), instead of a 1D chain. This leads to nonlocal interactions and a volume-law entanglement scaling for a typical bipartite cut. Due to the non-local nature of the model we choose $\alpha=1$ since we have no prior knowledge on the entanglement properties of the ground state.  We again use the small-angle initialization~\cite{holmes2021connecting, haug2021capacity} to generate a BP-free initial state. We compare this with layerwise optimization~\cite{skolik2020layerwise}, which is another common heuristic for avoiding BPs. There the circuit is initialized with a single layer, which is optimized, the circuit is then grown by one layer at a time and optimized while keeping the parameters in the previous layers constant.

Fig.~\ref{fig:6}~(b)-(c) reveal that for the Heisenberg model on a graph layerwise optimization ends up in a WBP during the optimization for both learning rates that we considered. The small-angle initialization successfully avoids the WBP for both learning rates, however good convergence is only achieved with $\eta=0.01$. This is similar to the situation encountered in the Heisenberg model in 1D, see Fig.~\ref{fig:7}, where a too large learning rate prevents convergence to the basin of attraction of the local minimum. Likewise to the case of 1D Heisenberg model, the fact that learning rate $\eta=0.1$ does not lead to convergence to a minimum can be revealed through the norm of the gradient which stays large even after 500 iterations.

In addition to the Heisenberg model on the random graph, we also considered the SYK model~\cite{kitaevTalks} that features a volume-law entangled ground state~\cite{yichen2019sykentanglement}. In Appendix~\ref{appx:extra_numerics} we illustrate that our method is also successful in preventing the BP occurrence and results in finding the SYK ground state.

\begin{figure}[t]
    \centering
    \includegraphics[width=\columnwidth]{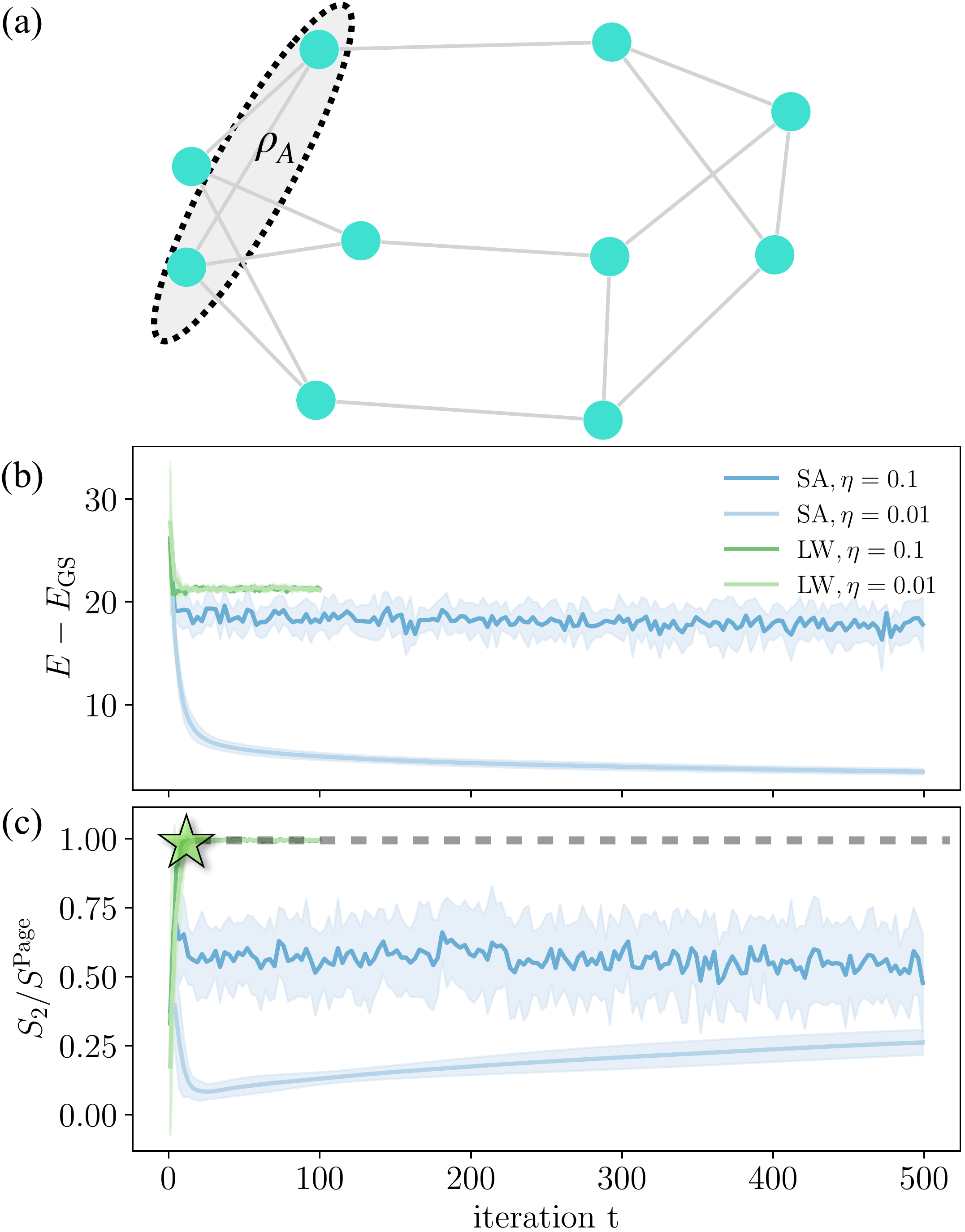}
    \caption{Application of our algorithm to the problem of finding the ground state for the Heisenberg model on a 3-regular random graph depicted in (a). Panel (b) shows the energy as a function of GD iterations $t$ and panel (c) illustrates the second R\'enyi entropy of two-spin region $A$ with $k=2$ shown in panel (a). Since the interactions are now nonlocal and we do not have any prior knowledge on the entanglement properties of the target state we set $\alpha=1$ (gray dashed line). For the initialization we use the small-angle initialization (SA) with $\epsilon_\theta=0.1$ and compare it to layerwise optimization (LW). LW encounters a WBP for both learning rates that we consider (green star). In contrast, SA avoids the WBP for both learning rates. Good performance and further convergence in the local minimum is only achieved through a smaller learning rate of $\eta=0.01$. We use a system size of $N=10$ and a random circuit from  Eq.~\eqref{eq:ruc} with circuit depth $p=100$. Data is averaged over 100 random instances.} \label{fig:6}
\end{figure}

\section{Summary and Discussion}\label{sec:5}

The main result of this work is the introduction of the concept of WBPs, which in essence provides an efficiently detectable version of BPs. In particular, we propose to use the classical shadows protocol to estimate the second R\'enyi entropy of small subregions that are independent of system size. If these subregions avoid nearly maximal entanglement -- a condition sufficient for avoiding WBPs -- the system also avoids conventional BPs. Building on this definition of the WBP, we proposed an algorithm that is capable of avoiding BPs on NISQ devices without requiring a computational overhead that scales exponentially in system size. 

In order to illustrate the notion of WBPs and the proposed algorithm, we studied a particular BP-free initialization of the variational quantum eigensolver. Furthermore, we considered an optimization procedure that uses gradient descent. Phenomenologically, we observed that the encounter of a BP during the optimization crucially depends on the learning rate, which controls the parameter update magnitude between consecutive optimization steps. A smaller learning rate is less likely to lead to the encounter of a BP during the optimization. However, choosing the learning rate to be very small degrades the performance of GD. These results support the feasibility of the proposed algorithm for efficiently avoiding BPs on NISQ devices. While our results and numerical simulations are focused on VQEs, they readily extend to other variational hybrid algorithms, such as quantum machine learning~\cite{benedetti2019parameterized, havlicek2019supervised, schuld2020circuit-centric}, quantum optimization~\cite{farhi2014quantum, sack2021quantumannealing, harrigan2021quantum}, or variational time evolution~\cite{barison2021efficientquantum, lin2021real}. 

Although the issue of avoiding BPs at the circuit initialization is a subject of active research~\cite{grant2019initialization, dborin2021matrix, skolik2020layerwise, holmes2021connecting, larocca2021diagnosing}, the influence and role of BPs in the optimization process has received much less attention~\cite{larocca2021theory}. Our results indicate that entanglement, in addition to playing a crucial role for circumventing BPs at the launch of the VQE, is also important for achieving a good optimization performance. In addition, our heuristic results in Sec.~\ref{sec:4} suggest that postselection based on the entanglement of small subregions may help to avoid low-quality local minima that are characterized by higher entanglement. Algorithm~1 allows for such postselection by appropriately tuning the value of $\alpha$. Doing so, however, requires some prior knowledge about the entanglement structure of the target state. This may be inferred from the structure of the Hamiltonian (for instance, for a Hamiltonian that is diagonal in the computational basis, the eigenstates are product states with no entanglement), or by targeting small instances of the computational problem using exact diagonalization.

Beyond that, one could imagine an algorithm where the learning rate is not only adapted when a WBP is encountered, but dynamically adjusted at every step of the optimization process. This may allow for efficiently maneuvering complicated optimization landscapes by staying clear of highly entangled local minima. VQE, for instance, is known to have many local minima~\cite{bittel2021training}, but a systematic study of their entanglement structure, required for devising such dynamic entanglement post selection procedure, has yet to be done.

Another important question concerns the effect of noise, which has been suggested to be an additional source for the emergence of BPs~\cite{wang2020noise}. Noise cannot be avoided on NISQ machines and has a profound impact on any near-term quantum algorithm, which is difficult to analyze analytically. Fortunately, none of the tools we propose are especially susceptible to noise corruption.
In fact, both the classical shadow protocol and the estimation of observables and purities are stable with respect to the addition of a small but finite amount of noise, and there have even been some proposals for noise mitigation techniques~\cite{chen2021robust,koh2020classical}. 

Finally, we comment on the possibility of testing Algorithm~1 on a real NISQ device. While the shadows protocol can readily be implemented on near-term devices to diagnose WBPs, whether a variational circuit with enough entangling layers that lead to a BP can be realized on a NISQ device is not entirely clear at this stage. Nevertheless recent results of Ref.~\cite{mi2021information} observed convergence of the out-of-time correlators to zero, indicating that a 2-design might already have been reached. This implies that large entanglement, as present in a BP, could be realizable on available NISQ devices, and opens the door to experimental studies of the effect of entanglement on the optimization performance on current NISQ machines using the proposed shadows protocol.  

\section*{Acknowledgments}
We thank Marco Cerezo, Zoe Holmes, and Nicholas Hunter-Jones for fruitful discussion and valuable feedback. We also acknowledge Adam Smith, Johannes Jakob Meyer, and Victor V. Albert for comments on the paper. The simulations were performed in the Julia programming language~\cite{julia} using the Yao module~\cite{yao}. S.H.S., R.A.M., A.A.M., and M.S. acknowledge support by the European Research Council (ERC) under the European Union's Horizon 2020 research and innovation program (Grant Agreement No.~850899).

\section*{Appendix}
\appendix
\section{Classical shadows and implementation details}\label{appx:shadows}

\emph{Shadow tomography} attempts to directly estimate interesting properties of an unknown state without performing full state tomography as an intermediate step. \citeauthor{aaronson2018shadow} and \citeauthor{aaronson2019gentle} showcased that such a direct estimation protocol can be exponentially more efficient, both in terms of Hilbert space dimension ($2^N$ in our case) and in the number of target properties (we use $L$ to denote this cardinality). These techniques do, however, require copies of the underlying quantum state to be stored in parallel within a quantum memory and highly entangled gates to be performed on all copies simultaneously. This is too demanding for current and near-term quantum devices. 

\citeauthor{huang2020predicting} developed a more near-term friendly variant of this general idea known as prediction with \emph{classical shadows}. Similar ideas have been independently proposed by  \citeauthor{paini2019approximate} and \citeauthor{morris2019selective}, respectively. As explained in detail below, the key idea is to sequentially generate state copies and perform randomly selected single-qubit Pauli measurements. Such measurements can be routinely implemented in current quantum hardware and 
enable the prediction of many (linear and polynomial) properties of the underlying quantum state. Importantly, the measurement budget (number of required measurements) still scales logarithmically in the number of target properties $L$, but it may scale exponentially in the support size $k$ of these properties. This is not a problem for local features, like subsystem purities or terms in a quantum many-body Hamiltonian, but does prevent us from directly estimating global state features like fidelity estimation. 

The general measurement budget that is required to simultaneously estimate $L$ local observables using classical shadows, necessary for the energy expectation value estimation, is provided in Theorem~\ref{thm:linear-shadows}. Typically the estimation of $L$ observables would scale linearly in $L$ (essentially every term is estimated individually). This is traded with a $\ln L$ dependence instead and an exponential dependence on the support $k$ of the operators. The cost for estimating the subsystem purities and thus second R\'enyi entanglement entropies is provided in Eq.~(\ref{eq:sample-complexity-purity}) and is exponential in $k$~(this dependence was recently proven to be unavoidable~\cite{chen2021exponential}). However since for the WBP check outlined in the main text $k$ is small, this is generally an efficient operation. Lastly, the cost for estimating the gradients is given in Eq.~(\ref{eq:gradient-cost}). The efficiency of using classical shadows to estimate the energy expectation value and gradients is system dependent (see Ref.~\cite{huang2020predicting} for the application of classical shadow tomography to the lattice Schwinger model). For the estimation of the purities, the shadow protocol, however, generally provides the most efficient technique currently available~\cite{elben2020mixed-state}. One possibility to circumvent these restrictions is to use a hybrid scheme where the energy and gradients are estimated with either classical shadows or the usual approach dependent on the structure of the Hamiltonian while the second R\'enyi entropies for the WBP check are always estimated using classical shadows.

\subsection{Data acquisition via classical shadows}

We use randomized single-qubit measurements to extract information about a variational $N$-qubit state represented by a density matrix
\begin{equation*}
\rho (\bm{\theta}) = |\psi (\bm{\theta}) \rangle \! \langle \psi (\bm{\theta})| \quad \text{with $\bm{\theta} \in \mathbb{R}^m$.}
\end{equation*}
To this end, we repeat the following procedure a total of $T$ times. For $1 \leq t \leq T$ we carry out the following.
\begin{enumerate}
    \item Prepare quantum state $\rho (\bm{\theta})$ on the NISQ device.
    \item Select $N$ single-qubit Pauli observables independently and uniformly at random.
    \item Perform the associated $N$-qubit Pauli measurement (single shot) to obtain $N$ classical bits ($0$ if we measure ''spin down" and $1$ if we measure ``spin up").
    \item Store $N$ single-qubit ``postmeasurement" states, $|s_i^{(t)}\rangle$, where an $i$ th qubit measurement outcome, $s_i$, can take six possible values denoted as $|0 \rangle$, $|1 \rangle$ if qubit is measured in $z$ basis, $|+ \rangle$ and $|- \rangle$ for $x$ basis, and, finally, $| +\mathrm{i}\rangle$ and $|-\mathrm{i}\rangle$ for $y$ basis. Here, $|\pm \rangle = \left( |0 \rangle \pm |1 \rangle \right)/\sqrt{2}$ denote Pauli-$x$ matrix eigenstates and $|\pm \mathrm{i} \rangle = \left(|0 \rangle \pm \mathrm{i}|1\rangle \right)/\sqrt{2}$ are two Pauli-$y$ eigenstates. In practice, this is achieved by applying random single-qubit Clifford gates that effectively implement a change of basis such that the usual $z$-basis measurement can be used, see Fig.~\ref{fig:1}~(a) for a visualization.
    \item (Implicitly) Construct the $N$-qubit \emph{classical shadow}
    \begin{equation}
    \hat{\rho}^{(t)} (\bm{\theta}) = 
     \bigotimes_{i=1}^N \left( 3 |s_i^{(t)} \rangle \! \langle s_i^{(t)}|- \mathbb{I} \right). \label{eq:shadow}
    \end{equation}
\end{enumerate}
Repeating this procedure a total of $T$ times provides us with $T$ classical shadows $\rho^{(1)} (\bm{\theta}),\ldots,\rho^{(T)} (\bm{\theta})$.
These are random matrices that are statistically independent (because they are constructed from independent quantum measurements).
By construction, each classical shadow reproduces the true underlying state in expectation (over both the choice of Pauli observable and the observed spin direction):
\begin{equation}
\mathbb{E} \left[ \hat{\rho}^{(t)} (\bm{\theta}) \right]= \rho (\bm{\theta}) = |\psi (\bm{\theta}) \rangle \! \langle \psi (\bm{\theta})|, \label{eq:shadow-average}
\end{equation}
see e.g.\ Ref.~\cite[Proposition~S.2]{huang2020predicting}.
We can now approximate this ideal expectation value by empirical averaging over all samples:
\begin{equation*}
\rho (\bm{\theta}) \approx \frac{1}{T}\sum_{t=1}^T \hat{\rho}^{(t)} (\bm{\theta}).
\end{equation*}
This approximation becomes exact in the limit $T \to \infty$ of infinitely many measurement repetitions. But the main results in Refs.~\cite{huang2020predicting,paini2019approximate} highlight that convergence actually happens much more rapidly. 

This is, in particular, true for subsystem density matrices. The tensor product structure of classical shadows, Eq.~\eqref{eq:shadow}, plays nicely with taking partial traces.
Let $A \subseteq \left\{1,\ldots,N\right\}$ be a collection of $|A|=k$ qubits. Then,
\begin{equation}
\hat{\rho}_A^{(t)}(\bm{\theta}) = \mathrm{tr}_{\neg A} \left( \hat{\rho}_A^{(t)}\right)  \label{eq:blueuced-shadow}
\end{equation}
is a $k$ qubit shadow that can be used to approximate the associated subsystem density matrix. More precisely, Eq.~\eqref{eq:shadow-average} asserts
\begin{equation}
\mathbb{E} \left[ \rho_A^{(t)}(\bm{\theta})\right] = \mathrm{tr}_{\neg A}\left( \mathbb{E} \left[ \hat{\rho}^{(t)}(\bm{\theta}) \right]\right) = \mathrm{tr}_{\neg A}(\rho(\bm{\theta})) = \rho_A (\bm{\theta})
\label{eq:blueuced-average}
\end{equation}
which can (and should) form the basis of empirical averaging directly for the subsystem in question. Here is a mathematically rigorous result in this direction. In what follows, the range (or weight) of an observable is the number of qubits on which it acts nontrivially. For example coupling terms in the Heisenberg Hamiltonian~\eqref{eq:heisenberg} have range $k=2$, while the external field terms have range $k=1$.

\begin{theorem} \label{thm:linear-shadows}
Fix a collection of $L$ range-$k$ observables $O_l$, as well as parameters $\epsilon,\delta >0$. Then, with probability (at least) $1-\delta$, classical shadows of size
\begin{equation*}
T \geq \frac{4^{k+1} \ln (2L/\delta)}{\epsilon^2}
\end{equation*}
suffice to jointly estimate \emph{all} $L$ expectation values up to additive accuracy $\epsilon$. In other words, 
\begin{equation*}
\hat{\rho}(\bm{\theta}) = \frac{1}{T}\sum_{t=1}^T \hat{\rho}^{(t)} (\bm{\theta}) \;\text{obeys} \; \left| \mathrm{tr} \left( O_l \hat{\rho}(\bm{\theta})\right) - \mathrm{tr} \left( O_l \rho (\bm{\theta})\right) \right| \leq \epsilon,
\end{equation*}
for all $1 \leq l \leq L$.
\end{theorem}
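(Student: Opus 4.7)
The plan is to combine three standard ingredients: unbiasedness of each single-shot classical shadow, a second-moment (``shadow-norm'') bound tailored to range-$k$ observables, and a median-of-means boost followed by a union bound over the $L$ target observables.

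First, I would invoke Eq.~\eqref{eq:shadow-average}: each snapshot satisfies $\mathbb{E}[\hat{\rho}^{(t)}(\bm{\theta})]=\rho(\bm{\theta})$, so the empirical average $\hat{\rho}(\bm{\theta})=T^{-1}\sum_{t}\hat{\rho}^{(t)}(\bm{\theta})$ is unbiased, and by linearity $\mathbb{E}[\mathrm{tr}(O_l\hat{\rho}(\bm{\theta}))]=\mathrm{tr}(O_l\rho(\bm{\theta}))$ for every $l$. This reduces the problem to controlling the fluctuation of an i.i.d.\ average of $T$ real-valued random variables, one for each observable.

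Second, and this is the crucial step, I would establish a variance bound of the form
\begin{equation*}
\mathrm{Var}\bigl[\mathrm{tr}\bigl(O_l\,\hat{\rho}^{(t)}(\bm{\theta})\bigr)\bigr]\;\leq\;4^{k}\,\|O_l\|_\infty^{2},
\end{equation*}
valid for every Hermitian observable $O_l$ of range at most $k$. This is the Pauli-shadow-norm bound of Ref.~\cite{huang2020predicting}. The idea is to expand $O_l$ in the single-qubit Pauli basis supported on its $k$ nontrivial qubits and exploit the fact that each single-qubit factor $(3|s_i\rangle\langle s_i|-\mathbb{I})$ in the classical shadow \eqref{eq:shadow} is precisely the inverse of the single-qubit depolarizing channel associated with a uniformly random Pauli measurement. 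As a consequence, distinct Pauli letters on the same qubit are orthogonal in expectation, the cross terms between different Pauli strings decouple at the level of the second moment, and only the ``matched'' contribution survives. A careful accounting of this cancellation, rather than the pointwise bound $|\mathrm{tr}(O_l\hat{\rho}^{(t)})|\leq 3^k\|O_l\|_\infty$ (which would only yield a $9^k$ scaling), produces the sharp $4^k$ prefactor.

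Third, I would convert this variance bound into a tail bound. The cleanest route is the median-of-means construction used in Ref.~\cite{huang2020predicting}: partition the $T$ shadows into $K\sim\ln(2L/\delta)$ groups of size $T/K$, form per-group empirical means $\hat{\mu}_l^{(1)},\ldots,\hat{\mu}_l^{(K)}$ for each observable, and return their median. Chebyshev's inequality applied to a single group, together with the variance bound above, yields a constant failure probability provided $T/K\gtrsim 4^{k+1}/\epsilon^2$, and a standard binomial argument then boosts the per-observable failure probability to $\exp(-\Omega(K))\leq\delta/L$. A union bound over $l=1,\ldots,L$ turns this into an overall failure probability of at most $\delta$, and rearranging for $T$ reproduces the claimed $T\geq 4^{k+1}\ln(2L/\delta)/\epsilon^2$.

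The main obstacle is clearly the second step: getting the sharp $4^k$ dependence from a genuine second-moment computation rather than the $9^k$ one obtains by bounding the single-shot estimator in sup norm. Once that ingredient is in hand, median-of-means plus a union bound is textbook, and tracking the numerical constants so as to land exactly on the prefactor $4^{k+1}$ is a routine, if slightly tedious, calculation.
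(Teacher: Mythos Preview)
Your proposal establishes the right sample complexity, but for the \emph{wrong estimator}. The theorem is stated for the plain empirical mean $\hat{\rho}(\bm{\theta})=T^{-1}\sum_{t}\hat{\rho}^{(t)}(\bm{\theta})$, whereas median-of-means produces a different estimator (the median of group means). Chebyshev on a single group followed by a binomial boost does not give exponential concentration for the \emph{unmodified} sample mean; it gives it for the median. Indeed, the paper explicitly remarks that Theorem~\ref{thm:linear-shadows} ``is slightly stronger than a related result in Ref.~\cite{huang2020predicting} (it does not require median-of-means estimation),'' so the whole point here is to avoid the route you took.

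The paper instead applies Bernstein's inequality directly to $X^{(t)}=\mathrm{tr}(O_l\hat{\rho}^{(t)})-\mathrm{tr}(O_l\rho)$. This requires two inputs: the variance bound $\mathbb{E}[(X^{(t)})^2]\leq 4^k$ that you identified, \emph{and} an almost-sure bound $|X^{(t)}|\leq 1+2^k$, obtained via H\"older and the explicit tensor-product form of the snapshot, $\|\hat{\rho}_A^{(t)}\|_1=\prod_{a\in A}\|3|s_a\rangle\!\langle s_a|-\mathbb{I}\|_1=2^{|A|}$. Feeding both into Bernstein yields a tail bound $2\exp(-\epsilon^2 T/4^{k+1})$ for each observable, and the union bound over $L$ then gives the result for the plain average. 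Your proposal is missing this almost-sure bound, which is precisely what allows one to bypass median-of-means and prove the statement as written.
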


We emphasize that it is not necessary to form global shadow approximations. If $O_l$ only acts nontrivially on subsystem $A_l \subseteq \left\{1,\ldots,N\right\}$ ($O_l = \tilde{O}_l \otimes \mathbb{I}_{\neg A_l}$), then $\mathrm{tr} \left( O_l \hat{\rho}(\bm{\theta})\right) = \mathrm{tr} \left( \hat{O}_l \hat{\rho}_{A_l} \right)$.
Theorem~\ref{thm:linear-shadows} is slightly stronger than a related result in Ref.~\cite{huang2020predicting} (it does not require median-of-means estimation).
Conceptually similar results have been established in Refs.~\cite{huang2021provably} and \cite{evans2019scalable,huang2021derandomization}. Notably, the authors of Ref.~\cite{acharya2021informationally} pointed out to us that they provided a similar statement as in Theorem~\ref{thm:linear-shadows} in their work. We present a formal proof in Appendix~\ref{Sec:proof} below.

\subsection{Estimating subsystem purities}

Suppose we are interested of estimating a collection of multiple subsystem purities
\begin{equation}
p_A (\bm{\theta}) = \mathrm{tr} \left( \rho_A (\bm{\theta})^2 \right) = \mathrm{tr} \left( \rho_A (\bm{\theta}) \rho_A (\bm{\theta}) \right), %\label{eq:purity-estimate}
\end{equation}
where $A \subseteq \left\{1,\ldots,N\right\}$ labels different subsystems of size $|A|=k$ each.
Then, we can use the corresponding subsystem shadows, Eq.~\eqref{eq:blueuced-shadow}, to approximate each $p_A$ by empirical averaging:
\begin{align}
\hat{p}_A (\bm{\theta}) = \frac{1}{T(T-1)} \sum_{t \neq t'} \mathrm{tr} \left( \hat{\rho}_A^t \hat{\rho}_A^{t'} \right). \label{eq:purity-estimate}
\end{align}
It is important that we restrict our averaging operation to distinct pairs of classical shadows ($t \neq t'$).
This guarantees that the expectation values factorize, i.e.\ 
\begin{equation*}
\mathbb{E} \left[ \hat{\rho}_A^t \hat{\rho}_A^{t'} \right]= \mathbb{E} \left[ \hat{\rho}_A^t \right] \mathbb{E} \left[ \hat{\rho}_A^{t'} \right] = \rho_A^2,
\end{equation*}
where the last equality is due to Eq.~\eqref{eq:blueuced-shadow}. 
Formula~\eqref{eq:purity-estimate} is an empirical average over all distinct shadow pairs contained in the data set. It converges to the true average $p_A (\bm{\theta})=\mathbb{E} \left[ \hat{p}_A (\bm{\theta}) \right]$, and the speed of convergence is governed by the variance. As data size $T$ increases, this variance decays as
\begin{equation*}
\mathrm{Var} \left[ \hat{p}_A (\bm{\theta}) \right]
\leq \frac{2}{T}\left( 2 \times 4^{k} p_2 (\bm{\theta}) + \frac{1}{T-1} 2^{4k}\right),
\end{equation*}
see, e.g.,\ Ref.~\cite[SM Eq.~(12)]{neven2021symmetry}.
In the large-$T$ limit, this expression is dominated by the first term in parentheses, $4 \times 2^{k}p_2 (\bm{\theta})/T$, and Chebyshev's inequality allows us to bound the probability of a large approximation error. For $\epsilon >0$,
\begin{equation*}
\mathrm{Pr} \left[ \left| \hat{p}_A (\bm{\theta})-\mathrm{tr} \left( \rho_A (\bm{\theta})^2 \right)\right| \geq \epsilon \right] \lesssim  \frac{4^{k+1} \mathrm{tr} \left( \rho_A^2 \right)}{T \epsilon^2},
\end{equation*}
provided that the total number of measurements $T$ is large enough to suppress the higher-order contribution in the variance bound (this is why we write $\lesssim$). In this regime, a measurement budget that scales as
\begin{equation}
T \geq  \frac{4^{k+1} \mathrm{tr} \left( \rho_A^2 \right)}{\epsilon^2 \delta} \label{eq:sample-complexity-purity}
\end{equation}
suppresses the probability of a sizable approximation error ($\geq \epsilon$) below $\delta$.
It is worthwhile to point out that this bound depends on the subsystem purity under consideration. Smaller purities are cheaper to estimate than large ones. It is also important to note that the accuracy parameter $\epsilon$ has to be small enough in order to accurately capture the purity in the WBP regime, which decays exponentially fast, but only with the subsystem size $k$.  

The $\delta$-dependence in Eq.~\eqref{eq:sample-complexity-purity} can be further improved to $\ln (1/\delta)$ by replacing simple empirical averaging in Eq.~\eqref{eq:purity-estimate} by median-of-means estimation~\cite{huang2020predicting}. Doing so would allow us to estimate all possible $L=\binom{N}{k}\leq N^k$ size-$k$ subsystem purities with only a $k \ln N$-overhead. Median-of-means estimation does, however, worsen the dependence on $\epsilon$ by a constant amount. 
Empirical studies conducted in Ref.~\cite{elben2020mixed} showcase that such a trade-off only becomes viable if one wishes to approximate polynomially many subsystem purities.

\subsection{Estimating gradients}

To perform the GD update step suggested in Algorithm 1 we require the knowledge of gradient $\nabla_{\bm{\theta}}E(\bm{\theta})$, which consists of $p N$ derivatives $\partial_{i, l}E(\bm{\theta})$. The derivative can naively be approximated using finite difference, though for variational single-qubit rotation gates, as used in the main text [see Eq.~(\ref{eq:ruc})], we can use the parameter-shift rule to compute the gradients exactly (up to finite sampling errors)~\cite{mitarai2018quantum, schuld2019evaluating}. The parameter-shift rule is given by
\begin{align*}
\partial_{i,l} E \left(\bm{\theta} \right) = \frac{1}{2} \left( E \left(\bm{\theta}+ (\pi/2)\bm{e}_{i,l} \right) - E \left( \bm{\theta}-(\pi/2) \bm{e}_{i,l} \right) \right),
\end{align*}
where $i$ labels the qubits and $l$ cycles through all circuit layers, and $\bm{e}_{i,l}$ is the unit vector.
In order to approximate a single gradient, we need to estimate the difference of two energy expectation values $E(\bm{\theta}_+)=\langle \psi (\bm{\theta}_+) | H |\psi (\bm{\theta}_+)\rangle$ with $\bm{\theta}_+ = \bm{\theta} + (\pi/2) \bm{e}_{i,l}$ and $E(\bm{\theta}_-)=\langle \psi (\bm{\theta}_-) | H |\psi (\bm{\theta}_-)\rangle$ with $\bm{\theta}_- = \bm{\theta} - (\pi/2) \bm{e}_{i,l}$ (we suppress $i$ and $l$ indices in $\bm \theta_\pm$ for the sake of brevity). Typically, the Hamiltonian itself can be decomposed into a sum of $L$ `simple' terms: $H=\sum_{l=1}^L h_l$, where often $L$ can be proportional to the number of qubits, $N$. This allows expression of the gradient as a linear combination of $2L$  expectation values,
\begin{equation}
\partial_{i,l} E \left(\bm{\theta} \right) =  \frac{1}{2} \sum_{l=1}^L \left( \langle \psi (\bm{\theta}_+) | h_l | \psi (\bm{\theta}_+) \rangle - \langle \psi (\bm{\theta}_-) | h_l | \psi (\bm{\theta}_-) \rangle \right), \label{eq:gradient-reformulation}
\end{equation}
each of which can be estimated by performing a collection of single-qubit Pauli measurements. 
If each term $h_l$ is supported on (at most) $k$-qubits, then Theorem~\ref{thm:linear-shadows} applies. 
Performing
$
T \approx 4^{k} \ln (L/\delta)/\epsilon^2
$
randomized Pauli measurements on state $\rho (\bm{\theta}_+)$ and $\rho (\bm{\theta}_-)$ each allows us to $\epsilon$-approximate all $2L$ simple terms in Eq.~\eqref{eq:gradient-reformulation}.

Unfortunately, approximation errors may accumulate when taking the sum over all $2L$ terms. Suppose that we obtain $\epsilon$-accurate estimators $\hat{E}_l (\bm{\theta}_\pm)$ of contribution of the local Hamiltonian term to the energy $E_l (\bm{\theta}_\pm)=\langle \psi (\bm{\theta}_\pm)|h_l | \psi (\bm{\theta}_{\pm}) \rangle$.
A triangle inequality over all approximation errors then produces only
\begin{align*}
& \left| \partial_{i,l}E(\bm{\theta})- \hat{\partial}_{i,l} E(\bm{\theta}) \right| \\
=&  \frac{1}{2} \left|\sum_{l=1}^L \left( \hat{E}_l (\bm{\theta}_+) - E_l (\bm{\theta}_+) - \hat{E}_l (\bm{\theta}_-) + E_l (\bm{\theta}_-\right) \right| \\
\leq &\frac{1}{2}\sum_{l=1}^L \left| \hat{E}_l (\bm{\theta}_+) - E_l (\bm{\theta}_+)\right| + \frac{1}{2}\sum_{l=1}^L \left| \hat{E}_l (\bm{\theta}_-) - E_l (\bm{\theta}_-) \right| = L \epsilon.
\end{align*}
This upper bound equals only $\epsilon$ if we rescale the accuracy of original approximation  to $\epsilon / L$.
Inserting this rescaled accuracy into Theorem~\ref{thm:linear-shadows} produces an overall measurement cost of
\begin{equation}
T \geq  \frac{4^{k+1} L^2 \ln (2L/\delta)}{\epsilon^2}. \label{eq:gradient-cost}
\end{equation}
The number $L$ of terms in the Hamiltonian typically scales (at least) linearly in the number of qubits $N$. 
This implies that the measurement budget, Eq.~\eqref{eq:gradient-cost}, required to (conservatively) estimate gradients 
scales quadratically in the system size and thus is parametrically larger than the (conservative) cost of estimating purities of size-$k$ subsystems, Eq.~\eqref{eq:sample-complexity-purity}. To obtain the full gradient $\nabla_{\bm{\theta}} E(\bm{\theta})$ the procedure has to be repeated $p N$ times since the parameter-shift rule has to implemented for every variational parameter. It should be noted though, that in principle this can be computed in parallel, provided large enough (quantum) computational resources. For example, different NISQ computers could be used to estimate different gradient components at the same time.

\subsection{Example of error accumulation in an Ising model}

The extra scaling with $L^2$ in Eq.~\eqref{eq:gradient-cost} is a consequence of error accumulation. If we use the same measurement data to jointly estimate many Hamiltonian terms, then all these estimators become highly correlated. And the effect of outlier corruption -- which occurs naturally in empirical estimation -- becomes amplified. 

Here, we illustrate this subtlety by means of a simple example. Let $H=-J\sum_{i=1}^{N-1} \sigma_i^z \sigma_{i+1}^z$  be the Ising Hamiltonian on a 1D chain comprised of $N$ qubits ($L=N-1$). Let us also assume that $N$ is even. This Hamiltonian is diagonal in the $z$ basis $|i_1,\ldots,i_N \rangle = |i_1 \rangle \otimes \cdots \otimes |i_N \rangle$ with $i_1,\ldots,i_N \in \left\{0,1\right\}$. So, in order to estimate $H$, it suffices to perform measurements solely in this basis. Born's rule asserts, that we observe bitstring $\hat{s}_1,\ldots,\hat{s}_N$ with probability
\begin{equation*}
\mathrm{Pr} \left[ \hat{s}_1,\ldots,\hat{s}_N \right] = \langle \hat{s}_1,\ldots,\hat{s}_N | \rho | \hat{s}_1,\ldots,\hat{s}_N \rangle,
\end{equation*}
where $\rho$ denotes the underlying $N$-qubit state. And, we can use these outcomes to directly estimate the total energy. It is easy to check that
\begin{align*}
\hat{E} =& \langle \hat{s}_1,\ldots,\hat{s}_N | H | \hat{s}_1,\ldots,\hat{s}_N \rangle \\
=& - J \sum_{i=1}^N \langle \hat{s}_i |\sigma_i^z| \hat{s}_i \rangle \langle \hat{s}_{i+1}|\sigma_{i+1}^z | \hat{s}_{i+1}\rangle
\end{align*}
obeys $\mathbb{E} \left[ \hat{E} \right] = \mathrm{tr} \left( H \rho \right)$, regardless of the quantum state $\rho$ in question. Also, estimating individual terms in this sum is both cheap and easy. 
Convergence of the sum, however, does depend on the underlying quantum state and the correlations within. 
To illustrate this, we choose $\lambda \in (0,1)$ and set
\begin{equation*}
\rho (\lambda) = (1-\lambda) |\psi \rangle \! \langle \psi| + \lambda |\phi \rangle \! \langle \phi|,
\end{equation*}
where $|\psi \rangle = |00 \cdots 00 \rangle$ is the Ising ground state and $|\phi \rangle = |01\cdots 01\rangle$ is a N\'eel state. These states obey 
$\langle \psi| H |\psi \rangle = - J(N-1)$ (ground state) and $\langle \phi |H| \phi \rangle = +J (N-1)$ (highest excited state), so
\begin{equation*}
\mathrm{tr}(H \rho(\lambda)) 
= -J(n-1) \left(1-2\lambda\right).
\end{equation*}
The task is to approximate this expectation value based on computational basis measurements. 
For each measurement, we either obtain outcome $0 \cdots 0$ (with probability $1-p$) or outcome $01 \cdots 01$ (with probability $p$). This dichotomy extends to our estimator
\begin{equation*}
\hat{E} =
\begin{cases}
\langle \psi | H |\psi \rangle = - J (N-1) &\text{with prob.\ $1-\lambda$}, \\
\langle \phi|H|\phi \rangle = +J (N-1) & \text{with prob.\ $\lambda$}.
\end{cases}
\end{equation*}
and we are effectively faced with estimating the (rescaled) expectation value of a biased coin. The associated variance of such a coin toss can be easily computed and amounts to
\begin{align*}
\mathrm{Var} \left[ \hat{E}\right] =& \mathbb{E} \left[ \hat{E}^2 \right] - \left( \mathbb{E} \left[ \hat{E}\right]\right)^2 
= 4J^2 (N-1)^2 \lambda (1-\lambda).
\end{align*}
Unless $\lambda \neq 0,1$ (where the variance vanishes), this variance it is proportional to $L^2=(N-1)^2$ and controls the rate of convergence. Asymptotically, a total number of 
\begin{equation*}
T \geq \mathrm{Var} \left[ \hat{E}\right] /\epsilon^2= 4J^2 L^2\lambda(1-\lambda)/\epsilon^2=\Omega (L^2/\epsilon^2)
\end{equation*}
independent coin tosses are necessary (and sufficient) to $\epsilon$-approximate the true expectation value $\mathbb{E} \left[ \hat{E}\right] = \mathrm{tr} \left( \rho (\lambda) H \right)$. This is a consequence of the central limit theorem and showcases that a measurement budget scaling with the number $L$ of Hamiltonian terms is unavoidable in general. 

We emphasize that this is a contrived worst-case argument that showcases how correlated measurements can affect the approximation quality of a sum of many simple terms, while each term individually is cheap and easy to evaluate. A generalization to the Heisenberg Hamiltonian considered in the main text, see Eq.~\eqref{eq:heisenberg}, is straightforward.  

\subsection{Proof of Theorem~\ref{thm:linear-shadows} \label{Sec:proof}}

Theorem~\ref{thm:linear-shadows} is a consequence of the following concentration inequality. Let $\|O\|_\infty$ denote the operator and spectral norm of an observable. We also use $\| \cdot \|_1$ to denote the trace norm.

\begin{theorem} \label{thm:linear-shadows-restatement}
Fix a collection of $L$ range-$k$ observables $O_l$ with $\|O_l\|_\infty \leq 1$, a quantum state $\rho$ and let $\hat{\rho} = \frac{1}{T}\sum_{t=1}^T \hat{\rho}^{(t)} $ be a classical shadow estimate thereof. Then, for $\epsilon \in (0,1)$,
\begin{equation*}
\mathrm{Pr}\left[ \max_{1 \leq l \leq L}\left| \mathrm{tr} \left( O_l \hat{\rho})\right) - \mathrm{tr} \left( O_l \rho \right) \right| \geq \epsilon \right] \leq 2L \exp \left( - \frac{\epsilon^2 T}{4^{k+1}} \right).
\end{equation*}
\end{theorem}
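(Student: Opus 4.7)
The plan is to reduce the maximum deviation to a single-observable tail bound by a union bound, establishing the per-observable estimate $2\exp(-\epsilon^2 T/4^{k+1})$ via Bernstein's inequality applied to the iid random variables $X^{(t)}_l := \mathrm{tr}(O_l \hat{\rho}^{(t)})$.

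First I would fix $l$ and isolate the three ingredients Bernstein's inequality requires: the mean, an almost-sure range bound, and a variance bound. The mean is $\mathbb{E}[X^{(t)}_l] = \mathrm{tr}(O_l \rho)$ by the unbiasedness relation~\eqref{eq:shadow-average}, and passing to the subsystem shadow~\eqref{eq:blueuced-shadow} lets one work only on the $k$ qubits that support $O_l$. Because each single-qubit factor $3|s\rangle\langle s|-\mathbb{I}$ has eigenvalues $\{2,-1\}$ and thus trace norm $3$, the tensor-product structure~\eqref{eq:shadow} gives $\|\hat{\rho}^{(t)}_{A_l}\|_1 \leq 3^k$, and H\"older's inequality then yields the almost-sure range $|X^{(t)}_l| \leq 3^k \|O_l\|_\infty \leq 3^k$.

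The technical heart of the proof is the variance bound $\mathrm{Var}[X^{(t)}_l] \leq 4^k \|O_l\|_\infty^2 \leq 4^k$. I would expand $O_l$ in the Pauli basis on the support $A_l$, $O_l = \sum_P c_P P$. For any Pauli $P$ of weight $k' \leq k$, the variable $\mathrm{tr}(P \hat{\rho}^{(t)})$ equals $\pm 3^{k'}$ exactly when the random single-qubit Pauli bases match the support of $P$ (an event of probability $3^{-k'}$) and vanishes otherwise, so $\mathbb{E}[(\mathrm{tr}(P \hat{\rho}^{(t)}))^2] = 3^{k'}$. Cross terms $\mathbb{E}[\mathrm{tr}(P \hat{\rho}^{(t)})\mathrm{tr}(P' \hat{\rho}^{(t)})]$ vanish whenever $P\neq P'$ disagree on any site where both are nontrivial, so the only surviving contributions come from nested supports; tracking them recovers the Pauli shadow-norm bound of Huang--Kueng--Preskill~\cite{huang2020predicting}, which together with $\|O_l\|_\infty \leq 1$ yields the stated $4^k$.

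With range $R \leq 3^k$ and variance proxy $\sigma^2 \leq 4^k$ in hand, Bernstein's inequality for iid bounded random variables gives
\begin{equation*}
\mathrm{Pr}\bigl[|\hat{X}_l - \mathrm{tr}(O_l\rho)| \geq \epsilon\bigr] \leq 2\exp\!\left(-\frac{T\epsilon^2/2}{4^k + 3^k\epsilon/3}\right),
\end{equation*}
where $\hat{X}_l = T^{-1}\sum_t X^{(t)}_l$. For $\epsilon \in (0,1)$ and $k \geq 1$ the subdominant term obeys $3^k\epsilon/3 \leq 4^k$, so the denominator is at most $2\cdot 4^k$ and the exponent is at most $-T\epsilon^2/4^{k+1}$. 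A union bound over $l=1,\ldots,L$ multiplies by $L$ and recovers the stated $2L\exp(-\epsilon^2 T/4^{k+1})$. The main obstacle I anticipate is the sharp variance bound: a naive use of $|X^{(t)}_l| \leq 3^k$ with Hoeffding would give only $\exp(-\epsilon^2 T/(2\cdot 9^k))$, which is parametrically weaker; extracting the improved $4^k$ requires the shadow-norm calculation sketched above and is the only nontrivial step, after which Bernstein and the union bound are entirely routine.
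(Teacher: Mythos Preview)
Your proposal is correct and follows essentially the same route as the paper: center the per-shot estimates, bound the range via H\"older and the tensor-product form of the shadow, invoke the $4^k$ shadow-norm variance bound from \cite{huang2020predicting}, apply Bernstein, simplify the denominator using $\epsilon<1$, and finish with a union bound. The only cosmetic differences are that the paper quotes the variance bound directly from \cite[Proposition~S3]{huang2020predicting} rather than sketching the Pauli-basis argument, and records the range as $R=1+2^k$ rather than your (sharper and correct) $3^k$; neither affects the final exponent since the $4^k$ variance term dominates.
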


This large deviation bound is a consequence of another well-known tail bound, see, e.g.,\ Ref.~\cite[Theorem 7.30]{rauhut2013book}.

\begin{theorem}[Bernstein inequality]
Let $X^{(1)},\ldots,X^{(T)}$ be independent, centered (i.e., $\mathbb{E}\left[X_t \right]=0$) random variables that obey $|X^{(t)}| \leq R$ almost surely.
Then, for $\epsilon >0$
\begin{equation*}
\mathrm{Pr} \left[ \left| \frac{1}{T}\sum_{t=1}^T X^{(t)} \right| \geq \epsilon \right] \leq 2  \exp \left( - \frac{\epsilon^2T^2/2}{\sigma^2+RT\epsilon} \right),
\end{equation*}
where $\sigma^2 = \sum_{t=1}^T \mathbb{E} \left[\left(X^{(t)}\right)^2\right]$.
\end{theorem}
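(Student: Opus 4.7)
The plan is to prove the stated Bernstein inequality via the standard Chernoff/exponential-moment method, with the core technical ingredient being a moment generating function (MGF) bound for bounded centered random variables. I would proceed in four steps, and then conclude with a union bound to obtain the two-sided tail.

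First, I would invoke the Chernoff transform: fix $\lambda > 0$ and write
\begin{equation*}
\Pr\!\left[\,\sum_{t=1}^T X^{(t)} \geq T\epsilon\,\right]
= \Pr\!\left[\,e^{\lambda \sum_t X^{(t)}} \geq e^{\lambda T \epsilon}\,\right]
\leq e^{-\lambda T\epsilon}\,\prod_{t=1}^T \mathbb{E}\!\left[e^{\lambda X^{(t)}}\right],
\end{equation*}
where the last step uses Markov's inequality and independence of the $X^{(t)}$.

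Second, I would establish the crucial MGF bound for a single bounded, centered variable $X^{(t)}$ with $|X^{(t)}|\leq R$ and $v_t := \mathbb{E}[(X^{(t)})^2]$. Expanding $e^{\lambda X^{(t)}}$ as a power series and using $\mathbb{E}[X^{(t)}]=0$ together with the crude moment bound $\mathbb{E}[(X^{(t)})^k]\leq v_t R^{k-2}$ for $k\geq 2$ yields
\begin{equation*}
\mathbb{E}\!\left[e^{\lambda X^{(t)}}\right] \;\leq\; 1 + \sum_{k\geq 2}\frac{\lambda^k v_t R^{k-2}}{k!} \;\leq\; \exp\!\left(\frac{\lambda^2 v_t/2}{1-\lambda R/3}\right)
\end{equation*}
for $0<\lambda<3/R$, after summing the geometric-like tail and using $1+x\leq e^x$. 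This step — carefully bounding the higher moments and converting the series into a clean closed form — is the main obstacle, since constants depend on how one bounds the tail of the series.

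Third, I would multiply these per-variable bounds and insert into the Chernoff inequality:
\begin{equation*}
\Pr\!\left[\,\textstyle\sum_t X^{(t)} \geq T\epsilon\,\right] \leq \exp\!\left(-\lambda T\epsilon + \frac{\lambda^2 \sigma^2/2}{1-\lambda R/3}\right),
\end{equation*}
where $\sigma^2 = \sum_t v_t$. Optimizing over $\lambda$ by the choice $\lambda = T\epsilon/(\sigma^2 + RT\epsilon/3)$ — which lies in $(0,3/R)$ — collapses the exponent into
\begin{equation*}
-\,\frac{T^2\epsilon^2/2}{\sigma^2 + RT\epsilon/3},
\end{equation*}
which is controlled by the form asserted in the theorem (up to the harmless factor of $3$ in front of $RT\epsilon$; absorbing this weakening recovers the stated bound).

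Finally, I would apply the identical argument to the sequence $-X^{(t)}$, which satisfies the same hypotheses, to obtain the matching lower tail $\Pr[\sum_t X^{(t)} \leq -T\epsilon]$. A union bound over the two one-sided events produces the factor of $2$ and yields the two-sided statement. Since each step (Chernoff, MGF bound, tensorization, optimization, union bound) is standard, the only place one has to be careful is the MGF estimate; with that in hand the rest is routine algebra.
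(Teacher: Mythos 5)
Your proof is correct: the Chernoff transform, the MGF bound $\mathbb{E}[e^{\lambda X}]\leq\exp\bigl(\tfrac{\lambda^2 v/2}{1-\lambda R/3}\bigr)$ via the moment estimate $\mathbb{E}[|X|^k]\leq vR^{k-2}$ and $k!\geq 2\cdot 3^{k-2}$, the optimal choice $\lambda = T\epsilon/(\sigma^2+RT\epsilon/3)$, and the final weakening of $RT\epsilon/3$ to $RT\epsilon$ all check out. The paper does not prove this statement but simply cites it as a well-known tail bound (Theorem 7.30 of the Foucart--Rauhut reference), and your argument is exactly the standard Cram\'er--Chernoff proof given there, so you have supplied the proof the paper delegates to the literature.
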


\begin{proof}[Proof of Theorem~\ref{thm:linear-shadows-restatement}]

Fix an observable $O=O_l$ with $1 \leq l \leq L$ and define $X^{(t)}=\mathrm{tr} \left( O \hat{\rho}^{(t)}\right)-\mathrm{tr} \left( O \rho \right)$. Then, by construction of classical shadows, each $X^{(t)}$ is an independent random variable that also obeys $\mathbb{E}\left[X^{(t)}\right]=0$, courtesy of Eq.~\eqref{eq:shadow-average}. Next, let $A \subseteq \left\{1,\ldots,N\right\}$ with $|A|=k$ be the subsystem on which the range-$k$ observable acts nontrivially, i.e., $O = O_A \otimes \mathbb{I}_{\neg A}$ and $\|O \|_\infty = \| O_A\|_\infty \leq 1$.
Then, Hoelder's inequality ($\left|\mathrm{tr} \left( O_A \rho_A \right)\right| \leq \|O_A \|_\infty \|\rho_A \|_1$) asserts
\begin{align*}
\left|X^{(t)}\right| =& \left| \mathrm{tr} \left( O_A \hat{\rho}^{(t)}_A\right) - \mathrm{tr} \left( O_A \rho_A \right) \right| \\
\leq & \left\| O_A \right\|_\infty \left( \left\| \rho_A \right\|_1 + \left\| \hat{\rho}_A^{(t)} \right\|_1 \right) \\
=& \left\| O_A \right\|_\infty \left( 1 + \prod_{a \in A} \left\| 3 |s_a^{(t)}\rangle \! \langle s_a^{(t)}|-\mathbb{I} \right\|_1 \right) \\
\leq & \left(1 + 2^{|A|}\right) = 1+2^k =R,
\end{align*}
where we also use $\|\rho_A \|_1 = \mathrm{tr}(\rho_A)=1$ and the specific form of subsystem classical shadows Eq.~\eqref{eq:blueuced-shadow}, that factorizes nicely into tensor products. Estimating the variance is more difficult by comparison. However, Ref.~\cite[Proposition~S3]{huang2020predicting} asserts
\begin{equation*}
\mathbb{E} \left[ \left(X^{(t)}\right)^2 \right] \leq \|O \|_{\mathrm{shadow}}^2 \leq 4^k \|O \|_\infty = 4^k.
\end{equation*}
In turn, $\sigma^2 \leq T 4^k$ and we conclude
\begin{align*}
& \mathrm{Pr} \left[ \left| \mathrm{tr} \left( O \hat{\rho}\right) - \mathrm{tr} \left( O \rho \right) \right| \geq \epsilon \right] \\
=& \mathrm{Pr} \left[ \left| \frac{1}{T} \sum_{t=1}^T X^{(t)}\right| \geq \epsilon \right]  \\
\leq & 2 \exp \left( - \frac{\epsilon^2 T^2/2}{T4^k + (1+2^k)T \epsilon} \right)  \\
\leq & 2 \exp \left( - \frac{\epsilon^2 T}{4^{k+1}} \right),
\end{align*}
where the last line is a rough simplification of the exponent. Such a tail bound is valid for any $O=O_l$ and the advertised statement follows from taking a union bound (also known as Boole's inequality) over all possible deviations:
\begin{align*}
& \mathrm{Pr}\left[ \max_{1 \leq l \leq L}\left| \mathrm{tr} \left( O_l \hat{\rho})\right) - \mathrm{tr} \left( O_l \rho \right) \right| \geq \epsilon \right] \\
\leq & \sum_{l=1}^L \mathrm{Pr}\left[\left| \mathrm{tr} \left( O_l \hat{\rho})\right) - \mathrm{tr} \left( O_l \rho \right) \right| \geq \epsilon \right] \\
\leq & 2L \exp \left( - \frac{\epsilon^2 T}{4^{k+1}} \right).
\end{align*}
\end{proof}

\section{Unitary \texorpdfstring{$t$}--designs}\label{appx:t-design}

Here, we briefly review the notion of unitary $t$-designs.
The Haar measure is the unique left and right invariant measure on the unitary group $U(d)$, where $d$ here stands for the dimension of the full Hilbert space, $d=2^N$. 
Unitary $t$-designs are ensembles of unitaries that approximate moments of the Haar measure. More precisely, let $\mathcal{E}$ be an ensemble of unitaries, i.e.,\ a subset of $U(d)$ equipped with a probability measure. For an operator $O$ acting on the $t$-fold Hilbert space $\mathcal{H}^{\otimes t}$, the $t$-fold channel with respect to $\mathcal{E}$ is defined as
\begin{equation}
\Phi^t_{\mathcal{E}}(O) = \int_{\mathcal{E}} \mathrm{d}U U^{\otimes t} (O) U^{\dagger \otimes t}.
\end{equation}
Essentially, we are asking when the average of an operator $O$ over the ensemble $\mathcal{E}$ equals an average over the full unitary group. A unitary $t$-design~\cite{dankert09unitary,gross07evenly} is an ensemble $\mathcal{E}$ for which the $t$-fold channels are equal for all operators~$O$,
$$
\Phi^t_{\mathcal{E}} (O) = \Phi^t_{\mathop{\rm{Haar}}} (O).
$$
Being a $t$-design means we exactly capture the first $t$ moments of the Haar measure with larger $t$ better approximating the full unitary group. There are known constructions of $t$-designs for $t=2$ and $t=3$~\cite{dankert2009exact, cleve2016nearlinear, kueng2015qubit, webb2016clifford, zhu17clifford}. For $t=1$, it is known that any basis for the algebra of operators of $\mathcal{H}$, including the Pauli group, is a $1$-design. In practice, one is more interested in when the ensemble of unitaries is close to forming a $t$-design. With this, given a tolerance $\epsilon_t >0$ one refers to the ensemble $\mathcal{E}$ as being an approximate $t$-design if 
$$
\norm{\Phi^t_{\mathcal{E}} - \Phi^t_{\mathop{\rm{Haar}}} }_\diamond \leq \epsilon_t,
$$
where $\norm{\cdot}_\diamond$ is the diamond norm -- a worst-case distance measure that is very popular in quantum information theory, see, e.g.,\ \cite{watrous_2018}. In the quantum-machine-learning literature the distance between the two $t$-fold channels is known as the expressibility of the ensemble $\mathcal{E}$~\cite{holmes2021connecting}, the smaller the distance the more expressive the ensemble is.

\section{Entanglement and unitary \texorpdfstring{$2$}--designs}\label{app:ent}
Random unitary operators have often been used to approximate late-time quantum dynamics. In the crudest approximation, it is assumed that the unitary matrix is directly drawn from the Haar measure. Although modeling quantum dynamics by random unitaries is an approximation, it has led to new insights into black hole physics~\cite{page1993, hayden2007blackholes, sekino2008fastscramblers} and produced computable models of information spreading and entanglement dynamics~\cite{nahum_2017_RUC, nahum_2018_RUC, hosur2016, pollman2018otocs}. 

In what follows, we consider a weaker situation where the random unitary operator is drawn from an ensemble $\mathcal{E}$ forming a $2$-design, and focus on the entanglement properties of $N$-qubits random pure states \be\label{eq:random_state}
|\psi \rangle = U|\psi_0\rangle,
\ee
with $U\sim \mathcal{E}$. These results have been previously obtained, for example, Refs.~\cite{popescu2006entanglement, oliveira2007,dahlsten2007typicalentanglement} and references therein. 

Given a bipartition $(A, \neg{A})$ of the system, we begin by studying the distance of the reduced density matrix $\rho_A$ to the maximally entangled state $\rho^{\infty}_{A} = \mathbb{I}_A/d_A$, where $d_A$ is the dimension of the Hilbert space $\mathcal{H}_A$ associated with region $A$. The full Hilbert space dimension is denoted by $d=2^N$.

\subsection{Bounding the expected trace distance}
Let us recall the following inequality relating the 1-norm (trace distance) $\norm{M}_1 = \tr \sqrt{ M^\dagger M}$, and the 2-norm (Frobenius norm) $\norm{M}_2 = \sqrt{\tr (M^\dagger M)}$
\begin{equation}
\norm{M}_2 \leq  \norm{M}_1  \leq \sqrt{d} \norm{M}_2. \label{eq:norm_1and2} 
\end{equation} 
We are interested in bounding $\mathbb{E}_\mathcal{E}\big(\norm{\rho_A - \mathbb{I}_A/d_A }_1\big)^2$. To do so we first use Jensen's inequality and afterwards employ the inequality~\eqref{eq:norm_1and2},
\be
\begin{split}
\mathbb{E}_\mathcal{E}\big(\norm{\rho_A - \mathbb{I}_A/d_A }_1\big) ^2 & \leq \mathbb{E}_\mathcal{E}\big(\norm{\rho_A - \mathbb{I}_A/d_A}^2_1 \big) \\
& \leq d_A \mathbb{E}_\mathcal{E}(\norm{\rho_A - \mathbb{I}_A/d_A}^2_2).
\end{split}
\ee
The last term on the right-hand side is related to the purity:
\be\label{eq:bound_rhoA}
\mathbb{E}_\mathcal{E}(\norm{\rho_A - \mathbb{I}_A/d_A}^2_2)= \mathbb{E}_\mathcal{E}(\tr \rho^2_A) - 1/d_A .
\ee
As we see, the only nontrivial dependence on $U$ comes from the purity of the reduced density matrix. Let $\{ |I\rangle = |i_A, j_{\neg{A}}\rangle \}_{i,j}$ be the computational basis for the Hilbert space $\mathcal{H} = \mathcal{H}_{A} \otimes \mathcal{H}_{\neg{A}}$ (such that it respects the bipartition).

Let us now proceed with the calculation of the average purity. We first compute the reduced density matrix $\rho_A$ and write it as a sum over products of matrix elements of the unitary operator $U$:
\begin{align*}
    \rho_A &= \sum^{d_{\neg{A}}}_{j_{\neg{A}}} \langle j_{\neg{A}}| \rho | j_{\neg{A}} \rangle = \sum^{d_{\neg{A}}}_{j_{\neg{A}}} \sum^{d}_{J,I} \rho_{I,K}  \langle j_{\neg{A}}|  I \rangle \langle K | j_{\neg{A}} \rangle, \\
    &= \sum_{i_A, k_A} \sum_{j_{\neg{A}}} \rho_{(i_A, j_{\neg{A}}), (k_A, j_{\neg{A}})} | i_A \rangle \langle k_A |, \\
    &= \sum_{i_A, k_A} \sum_{j_{\neg{A}}} U_{(i_A, j_{\neg{A}}),(0,0)}U^*_{(k_A, j_{\neg{A}}),(0,0)} | i_A \rangle \langle k_A |,
\end{align*}
where the last line follows from Eq.~\eqref{eq:random_state}.
 
Afterwards, it can be easily verified that $\tr (\rho^2_A)$ reads
\begin{widetext}
\be
\begin{split}
\tr(\rho^2_A) &= \sum_{i_A, k_A} \sum_{j_{\neg{A}}, p_{\neg{A}}} U_{(i_A, j_{\neg{A}}), (0, 0)} U_{(k_A, p_{\neg{A}}), (0,0)} U^*_{(k_A, j_{\neg{A}}), (0,0)} U^*_{(i_A, p_{\neg{A}}),(0,0)}.
\end{split}
\ee
\end{widetext}
Using the following identities for the first and second moment of the unitary group endowed with the Haar measure
\begin{equation}\label{eq:weingarten}
\begin{split}
&\int_{U(n)} dU_H U_{i,j} U^*_{i_1,j_1} = \delta_{i,i_1} \delta_{j,j_1}/d, \\
& \int_{U(n)} dU_H U_{i,j} U_{l,m}U^*_{i_1,j_1} U^*_{l_1,m_1} = \\ &\frac{1}{d^2-1}(\delta_{i,i_1}\delta_{l,l_1}\delta_{j,j_1}\delta_{m,m_1}+  \delta_{i,l_1}\delta_{l,i_1}\delta_{j,j_1}\delta_{m,m_1}) - \\
& \frac{1}{d(d^2-1)} (\delta_{i,i_1}\delta_{l,l_1}\delta_{j,m_1}\delta_{m,j_1} + \delta_{i,l_1}\delta_{l,i_1}\delta_{j,j_1}\delta_{m,m_1}),
\end{split}
\end{equation}
we get that the following simple expression for the expected purity
\be\label{eq:average_purity}
\mathbb{E}_\mathcal{E}(\tr \rho^2_A) = \frac{d_A+d_{\neg{A}}}{1+d_A d_{\neg{A}}}.
\ee
Finally, substituting Eq.~\eqref{eq:average_purity} into Eq.~\eqref{eq:bound_rhoA} we obtain
\be
\mathbb{E}_\mathcal{E}\big(\norm{\rho_A - \mathbb{I}_A/d_A }_1\big) \leq \sqrt{\frac{d_A^2 -1}{d_Ad_{\neg{A}} + 1}} \sim \mathcal{O}(\sqrt{d_A/d_{\neg{A}}})
\ee
Note that the above result implies that when the complementary subsystem $\neg{A}$ is (significantly) larger than $A$, the expected deviation of $\rho_A$ from the maximally mixed state is exponentially small.

\subsection{Bounding the expected second R\'enyi entropy}
Let us now explore the average value of the second R\'enyi entropy, which, as mentioned in the main text, can be easily estimated using the classical shadows protocol by~\citeauthor{huang2020predicting}. 

Computing the exact average value of the second R\'enyi is a complicated task. Hence, we instead provide a lower and an upper bound for it. On one hand, via Jensen's inequality, we have that
\be
-\ln \mathbb{E}_\mathcal{E}(\tr \rho^2_A) \leq \mathbb{E}_\mathcal{E}(S_2(\rho_A)), 
\ee
which changes the focus of our attention to the expectation value of the purity of the reduced density matrix $\mathbb{E}_\mathcal{E}(\tr \rho^2_A)$. Using the result from the previous subsection Eq.~\eqref{eq:average_purity} and taking the logarithm, we get the following lower bound:
\be
-\ln \mathbb{E}_\mathcal{E}(\tr \rho^2_A)=-\ln  \frac{d_A+d_{\neg{A}}}{1+d_A d_{\neg{A}}}.
\ee
Taking the large $d$ limit and writing everything in terms of $d_A/d_{\neg{A}}$ we find
\be
\label{eq:S2_final}
-\ln \mathbb{E}_\mathcal{E}(\tr \rho^2_A) \approx \ln d_A - \frac{d_A}{d_{\neg{A}}} + \mathcal{O}\left(\frac{d^2_A}{d^2_{\neg{A}}}\right).
\ee
On the other hand, we have that for any state $\rho_A$ the following inequality holds: $$
S_2(\rho_A) \leq S(\rho_A) = -\ln \rho_A \tr \rho_A,
$$
where $S(\rho_A)$ is the von Neumann entropy of $\rho_A$. Taking averages does not change this relation and we conclude $\mathbb{E}_{\mathcal{E}}(S_2(\rho_A)) \leq \mathbb{E}_{\mathcal{E}}(S(\rho_A))$. The expectation value of the von Neumann entropy is upper bounded by the \emph{Page entropy}:
\be
S^{\text{Page}}(d_A, d) = \frac{1}{\ln 2}\Big( -\frac{d_A - 1}{2 }\frac{d_A}{d} + \sum_{j=d/d_A+1}^{d} \frac{1}{j} \Big).
\ee
\citeauthor{page1993} conjectured that this analytical formula accurately captures the von Neumann entropy of a Haar random state. This conjecture was subsequently proven in Ref.~\cite{proofPage_1994}. 
Putting everything together, we obtain
\be
-\ln  \frac{d_A+d_{\neg{A}}}{1+d_A d_{\neg{A}}} \leq \mathbb{E}_\mathcal{E}(S_2(\rho_A)) \leq S^{\text{Page}}(d_A, d).
\ee
Considering now that the number of qubits inside region $A$ is equal to $k$ and assuming that $d_A/d_{\neg{A}} = 1/2^{N-2k} \ll 1$ we arrive
at the expression in Theorem~\ref{thm:2-design}, that is
\be
k \ln 2 - \frac{1}{2^{N-2k}} \leq \mathbb{E}_{\mathcal{E}}(S_2) \leq k \ln 2 - \frac{1}{2}\frac{1}{2^{N-2k}}.
\ee
We see that whenever the unitary ensemble $\mathcal{E}$ forms a $2$-design, the expected value of the second R\'enyi entropy is close to the Page entropy. 

\section{Entanglement growth and learning rate}\label{appx:perturb_learning}

Here we detail the derivation of Eq.~(\ref{eq:trace_learning}). We first upper bound the trace distance via
\begin{align}\label{eq:trace-distance}
	T(\rho_A, \sigma_A) \leq T(\ket{\psi}, \ket{\phi}) &= \sqrt{1-f(\ket{\psi}, \ket{\phi})}, 
\end{align}
where $f$ stands for the pure state fidelity $f(\ket{\psi(\bm{\theta})}, \ket{\psi(\bm{\theta} +\bm{\delta})})=|\bra{\psi(\bm{\theta})} \ket{\psi(\bm{\theta} + \bm{\delta})}|^2$. Taylor expanding the pure state fidelity around $\bm{\theta}$ we get
\begin{equation}\label{eq:fidelity-fisher}
	f(\ket{\psi(\bm{\theta})}, \ket{\psi(\bm{\theta} +\bm{\delta})}) = 1-\frac{1}{4}\bm{\delta}^T \mathcal{F}(\bm{\theta}) \bm{\delta} + \mathcal{O}(\bm{\delta}^4),
\end{equation}
where $\mathcal{F}(\bm{\theta})$ is the QFIM given by
\begin{equation}
\label{eq:qfim_def}
\mathcal{F}_{i j}(\bm{\theta}) = 4 \Re{\bra{\partial_i \psi} \ket{\partial_j \psi} - \bra{\partial_i \psi} \ket{\psi} \bra{\psi} \ket{\partial_j \psi}}.
\end{equation}
Assuming $\bm{\delta} \ll 1$ we can neglect higher-order terms in $\bm{\delta}$ and so
\be	
	T(\rho_A, \sigma_A) \lesssim \sqrt{\frac{1}{4}\bm{\delta}^T \mathcal{F}(\bm{\theta}) \bm{\delta}} =\sqrt{\frac{\eta^2}{4} (\nabla_{\bm{\theta}}E)^T \mathcal{F}(\bm{\theta}) \nabla_{\bm{\theta}}E}, 
\ee	
where in the last equality we plug in the parameter change under GD (Eq.~\eqref{eq:gd}), $\bm{\delta}=-\eta \nabla_{\bm{\theta}} E$.

\section{Algorithm performance for SYK model}\label{appx:extra_numerics}
\begin{figure}[t]
    \centering
    \includegraphics[width=\columnwidth]{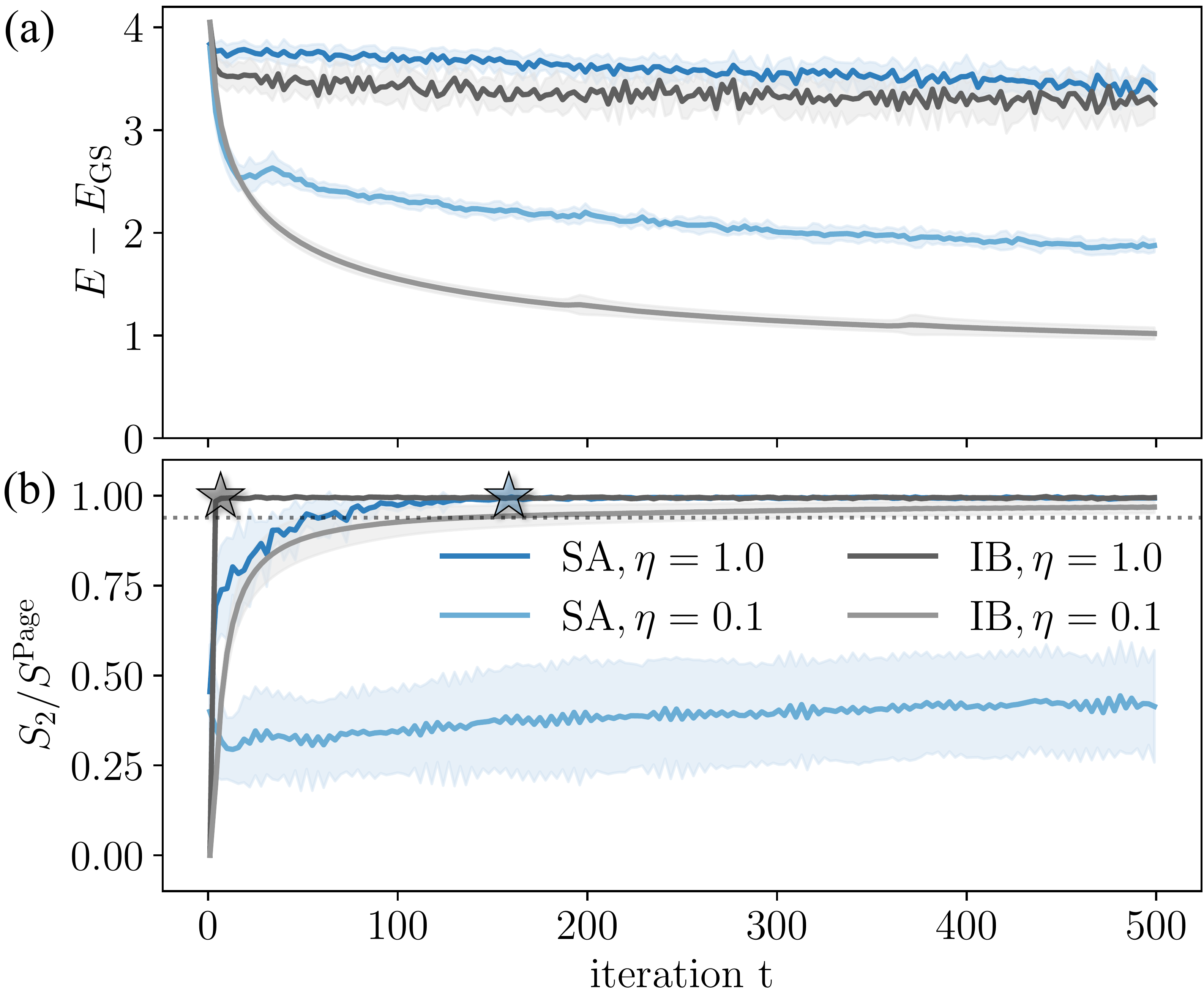}
    \caption{(a-b) The application of our algorithm to the problem of finding the ground state of the SYK model. For the initialization we consider the small-angle (SA) ($\epsilon_\theta=0.1$) and identity block (IB) initialization~\cite{grant2019initialization} (using one block). We can see that only through the reset of the learning rate $\eta$, as suggested by Algorithm 1, WBPs are avoided during the optimization. The entanglement entropy of the target state is nearly maximal (indicated by the dotted line), we omit the WBP line for $\alpha=1$ for improved visibility. We measure energy in units of $J$ and use a system size of $N=10$, subsystem size $k=2$ and a random circuit from Eq.~\eqref{eq:ruc} with circuit depth $p=100$. Data is averaged over 100 random instances.} \label{fig:7}
\end{figure}

In this section we show the numerical results for the VQE applied to the ground state search of the SYK model~\cite{kitaevTalks}. The SYK model provides a canonical example for a volume-law model where the ground state is nearly maximally entangled~\cite{yichen2019sykentanglement}. The nonlocal nature of the Hamiltonian does not allow for an efficient estimation of the energy expectation value of this model using classical shadows. Thus, this model may be viewed as a theoretical example that shows that application of our algorithm is not limited to area-law entangled states. We use a small-angle initialization as well as the identity-block initialization~\cite{grant2019initialization} to illustrate our method.

The SYK model is a quantum-mechanical model of $2N$ spinless Majorana fermions $\chi_i$ satisfying the following anticommutation relations $\{ \chi_i, \chi_j\} = \delta_{ij}$. The SYK model was introduced by Kitaev~\cite{kitaevTalks} as a simplified variant of a model introduced by Sachdev and Ye~\cite{sachdevandye_1993}. The Hamiltonian of the model is 
\begin{equation}    
H_{\text{SYK}} = \sum_{i,j,k,l}^{2N} J_{i,j,l} \chi_i \chi_j \chi_k \chi_l,
\end{equation}
where the couplings $J_{i,j,k,l}$ are taken randomly from a Gaussian distribution with zero mean and variance
$$
\mathop{\rm{var}}[ J_{i,j,k,l}] = \frac{3!}{(N-3)(N-2)(N-1)} J^2.
$$
We can study Majorana fermions using spin-chain variables by a nonlocal change of basis known as the Jordan-Wigner transformation:
\begin{equation}
\chi_{2i} = \frac{1}{\sqrt{2}} \sigma^x_1\cdots \sigma^x_{i-1} \sigma^y_i, \quad \chi_{2i-1} = \frac{1}{\sqrt{2}} \sigma^x_1\cdots \sigma^x_{i-1} \sigma^z_i,    
\end{equation}
such that $\{\chi_i, \chi_j \} = \delta_{i,j}$. With this representation, encoding $2N$ Majorana fermions requires $N$ qubits. For our studies, we set $J=1$ and consider a system of $N=10$ qubits.

We study performance of VQE for SYK model using two different initializations. 
Fig.~\ref{fig:7}~(a)-(b) show that the WBP is avoided during optimization for if the learning rate is chosen appropriately. For a large learning rate ($\eta=1$) both initializations encounter a WBP during the optimization (indicated by the gray and blue star). Once the learning rate is decreased ($\eta=0.1)$ the entanglement entropy slowly grows to the nearly maximal value associated with the ground state of the SYK model (dotted line) instead of uncontrollably reaching the Page value. For this model, it is important to use $\alpha=1$ (the default value) such that the entanglement entropy can grow during the optimization. Only if there is some \textit{a priori} knowledge of the properties of the ground state, $\alpha$ can be chosen to be smaller. 

The identity block initialization~\cite{grant2019initialization} here leads to the best optimization performance. We attribute this to the fact that the identity block initialization allows for a faster growth in entanglement since the parameter values are highly fine tuned. Our results suggest that sensitivity of the initialization ansatz to small perturbations may be beneficial for the cases when the ground state is nearly maximally entangled. These results highlight the advantage of using our algorithm. The tracking of the second R\'enyi entanglement entropy during the optimization reveals that the larger learning rates encounter a WBP while the smaller learning rates successfully avoid it.

\end{document}